
\newif\iffinal
\finalfalse	
\finaltrue 

\newif\ifmarek
\marektrue

\documentclass[reqno,twoside,11pt]{amsart}
\iffinal\else\usepackage[notref,notcite]{showkeys}\fi
\usepackage{cite}
\usepackage{lscape}
\usepackage[toc,title,page]{appendix}
\usepackage{subcaption}
\captionsetup[subfigure]{labelfont=rm}
\usepackage{amsmath}
\usepackage{amsmath,lipsum}
\usepackage{amsfonts}
\usepackage{amssymb}
\usepackage{bbm}
\usepackage{graphicx}
\usepackage{booktabs}
\usepackage{enumitem}
\usepackage{booktabs}
\usepackage{multirow}
\usepackage{tikz}
\usetikzlibrary{arrows, automata, positioning}
\graphicspath{{/Users/aubain2015/Documents/york/}}
\usepackage{verbatim}
\usepackage{color}
\usepackage{float}
\usepackage{dirtytalk}
\IfFileExists{epsf.def}{\input epsf.def}{\usepackage{epsf}}
\ifmarek
\IfFileExists{myowntimes.sty}{\usepackage{myowntimes}\usepackage{mathrsfs}}
	{\usepackage{times}\newcommand{\mathscr}{\mathcal}}

\usepackage{mathptmx}
\fi

\DeclareFontFamily{OT1}{eusb}{} \DeclareFontShape{OT1}{eusb}{m}{n}
{<5> <6> <7> <8> <9> <10> <11> <12> <14.4> eusb10}{}
\DeclareMathAlphabet{\eusb}{OT1}{eusb}{m}{n}

\DeclareFontFamily{OT1}{eusm}{} \DeclareFontShape{OT1}{eusm}{m}{n}
{<5> <6> <7> <8> <9> <10> <11> <12> <14.4> eusm10}{}
\DeclareMathAlphabet{\eusm}{OT1}{eusm}{m}{n}

\DeclareFontFamily{OT1}{eufm}{} \DeclareFontShape{OT1}{eufm}{m}{n}
{<5> <6> <7> <8> <9> <10> <11> <12> <14.4> eufm10}{}
\DeclareMathAlphabet{\mathfrak}{OT1}{eufm}{m}{n}

\DeclareFontFamily{OT1}{fraktura}{}
\DeclareFontShape{OT1}{fraktura}{m}{n} {<5> <6> <7> <8> <9> <10>
<11> <12> <13> <14.4> [1.1] eufm10}{}
\DeclareMathAlphabet{\fraktura}{OT1}{fraktura}{m}{n}

\DeclareFontFamily{OT1}{cmfi}{} \DeclareFontShape{OT1}{cmfi}{m}{n}
{<5> <6> <7> <8> <9> <10> <11> <12> <13> <14.4> [0.9] cmfi10}{}
\DeclareMathAlphabet{\cmfi}{OT1}{cmfi}{b}{n}

\DeclareFontFamily{OT1}{cmss}{} \DeclareFontShape{OT1}{cmss}{m}{n}
{<5> <6> <7> <8> <9> <10> <11> <12> <13> <14.4> cmss10}{}
\DeclareMathAlphabet{\cmss}{OT1}{cmss}{m}{n}

\setlength{\topmargin}{0.1truein}
\setlength{\oddsidemargin}{0.325truein}
\setlength{\evensidemargin}{0.325truein}
\setlength{\textheight}{8.1truein}
\setlength{\textwidth}{5.85truein}

%

\newtheoremstyle{thm}{1.5ex}{1.5ex}{\itshape\rmfamily}{}
{\bfseries\rmfamily}{}{2ex}{}

\newtheoremstyle{def}{1.5ex}{1.5ex}{\slshape\rmfamily}{}
{\bfseries\rmfamily}{}{2ex}{}

\newtheoremstyle{rem}{1.3ex}{1.3ex}{\rmfamily}{}
{\itshape}
{} {1.5ex}{}


\theoremstyle{thm}
\newtheorem{theorem}{Theorem}[section]

\newtheorem{corollary}[theorem]{Corollary}

\theoremstyle{def}

\theoremstyle{rem}

\numberwithin{equation}{section}


\renewcommand{\subsection}{\secdef\subsct\sbsect}
\newcommand{\subsct}[2][default]{\refstepcounter{subsection}
\addcontentsline{toc}{subsection}
{{\tocsection{\!\!}{\hspace{1.2em}\thesubsection}{\!\!\!\!#1\dotfill}}{}}
\nopagebreak\vspace{0.45\baselineskip} {\flushleft\bf
\thesection.\arabic{subsection}~\bf #1.~}
\\*[3mm]\noindent
\nopagebreak}
\newcommand{\sbsect}[1]{\vspace{0.1cm}\noindent
\textbf{#1.~}\vspace{0.1cm}}

\renewcommand{\subsubsection}{%
\secdef \subsubsect\sbsbsect}
\newcommand{\subsubsect}[2][default]{%
\refstepcounter{subsubsection} 
\addcontentsline{toc}{subsubsection}{{\tocsection{\!\!}
{\hspace{3.05em}\thesubsubsection}{\!\!\!\!#1\dotfill}}{}}
\nopagebreak
\vspace{0.15\baselineskip} \nopagebreak {\flushleft\rmfamily
\itshape\arabic{section}.\arabic{subsection}.\arabic{subsubsection}
\ \rmfamily #1\/.}\ }
\newcommand{\sbsbsect}[1]{\vspace{0.1cm}\noindent
\rmfamily \itshape
\arabic{section}.\arabic{subsection}.\arabic{subsubsection} \
\sffamily #1\/.\ }
\iffinal

\else

\fi
   
\usepackage[toc,title,page]{appendix}

\newcommand{\scrF}{\mathscr{F}}



\title[]
{Bitcoin versus S\&P500 Index: Return and Risk Analysis}
\author [] {A.H.Nzokem$^1$}
\thanks {$^1$ hilaire77@gmail.com}

\begin{document}

\begin{abstract}
S$\&$P 500 index is considered the most popular trading instrument in financial markets. With the rise of cryptocurrencies over the past years, Bitcoin has also grown in popularity and adoption. The paper aims to analyze the daily return distribution of the Bitcoin and S$\&$P 500 index and assess their tail probabilities through two financial risk measures. As a methodology, We use Bitcoin and S$\&$P 500 Index daily return data to fit The seven-parameter General Tempered Stable (GTS) distribution using the advanced Fast Fractional Fourier transform (FRFT) scheme developed by combining the Fast Fractional Fourier (FRFT) algorithm and the 12-point rule Composite Newton-Cotes Quadrature. The findings show that peakedness is the main characteristic of the S$\&$P 500 return distribution, whereas heavy-tailedness is the main characteristic of the Bitcoin return distribution. The GTS distribution shows that $80.05\%$ of S$\&$P 500 returns are within $-1.06\%$ and $1.23\%$ against only $40.32\%$ of Bitcoin returns.  At a risk level ($\alpha$), the severity of the loss (\textbf{$AVaR_{\alpha}(X)$}) on the left side of the distribution is larger than the severity of the profit (\textbf{$AVaR_{1-\alpha}(X)$}) on the right side of the distribution.  Compared to the S$\&$P 500 index, Bitcoin has $39.73\%$ more prevalence to produce high daily returns (more than $1.23\%$ or less than $-1.06\%$). The severity analysis shows that at a risk level ($\alpha$) the average value-at-risk (\textbf{$AVaR(X)$}) of the bitcoin returns at one significant figure is four times larger than that of the S$\&$P 500 index returns at the same risk.\\
\smallskip\noindent
{\bf Keywords:} S$\&$P 500 index, GST distribution, Bitcoin, value-at-risk, average value-at-risk.\\
\end{abstract}

\maketitle
\section {Introduction}
\noindent
A cryptocurrency is a cryptographically secured digital asset, sometimes known as a cryptoasset \cite{lewis2018basics}. Bitcoin was the first cryptocurrency created in 2009 by Satoshi Nakamoto. The idea behind Bitcoin was to create 
a peer-to-peer electronic payment system that allows online payments to be sent directly from one party to another without going through a financial institution\cite{nakamoto2008bitcoin}. Bitcoin was set up to replace financial institutions with a payment network based on the blockchain or distributed ledger technology. Since its inception, Bitcoin has grown in popularity and adoption and is now viewed as a viable legal tender in some countries. Its rising popularity has attracted growing interest and questions in economics and finance literature regarding the usage of Bitcoin as currency and the formation and dynamic of Bitcoin prices.\\
\noindent
The economic appraisal of Bitcoin as currency has been done in many studies\cite{baur2021volatility,baur2016bitcoin,bjerg2016bitcoin,yermack2015bitcoin,stephanie2014bitcoin}. Like other traditional fiat currencies (Dollar, Euro, Yen, \dots), whether the bitcoin may be considered as a currency depends on its ability to fulfill the three basic functions: A medium of exchange, a store of value, and a unit of account. As a medium of exchange, Bitcoin can be used to pay someone for something or to extinguish a debt or financial obligation. However, bitcoin bears exchange rate risk (BTC/USD)\cite{baur2021volatility, baur2016bitcoin} and is not widely accepted in its current state; only five out of the top 500 online merchants accepted Bitcoin in 2016 \cite{lewis2018basics}. As a store of value, Bitcoin will be worth the same as it is today. This function is also rejected in the literature \cite{baur2021volatility, baur2016bitcoin, yermack2015bitcoin, stephanie2014bitcoin} as Bitcoin is unstable and has excess volatility.
As a unit of count, Bitcoin can be used to compare the value of two items or to count up the total value of assets. The extreme volatility\cite{baur2018bitcoin} of Bitcoin makes it difficult or impossible to derive the true value of a specific good in Bitcoin. Some studies \cite{baur2018bitcoin,yermack2015bitcoin} show that the statistical properties of Bitcoin are uncorrelated with traditional asset classes such as stocks, bonds, and commodities, and the transaction analysis of Bitcoin accounts shows that Bitcoins are mainly used as an investment tool and not as a currency. A similar study \cite{baur2021volatility} shows that the volatility of Bitcoin prices is extreme and almost ten times higher than the volatility of major exchange rates (US/Euro and US/Yen) and concludes that Bitcoin cannot function as a medium of exchange, but can be used as a risk-diversified tool.\\
\noindent
The formation and dynamics of Bitcoin prices is another important aspect of the bitcoin that is studied in the literature \cite{ciaian2016economics,koutmos2023investor,chen2020fear,hakim2020bitcoin}. Several factors affecting Bitcoin price have been identified in the literature review: market forces of Bitcoin
supply and demand, Bitcoin attractiveness for investors, and global macroeconomic and financial development. The empirical results \cite{ciaian2016economics} show that the market forces of Bitcoin supply and demand greatly impact Bitcoin price, confirming the major role played by the standard economic model of currency in explaining Bitcoin price formation. However, the same study \cite{ciaian2016economics} shows that the global macro-financial development (captured by the Dow Jones Index, exchange rate, and oil price) does not significantly impact the Bitcoin price. The relationship between Bitcoin price and attractiveness has also been studied \cite{koutmos2023investor, hakim2020bitcoin, chen2020fear}. The attractiveness variables are the number of Google searches that used the terms bitcoin, bitcoin crash, and crisis. The findings\cite{hakim2020bitcoin} show that a Bitcoin price increase is usually preceded by an increase in the worldwide interest in Bitcoin; similarly, a fallen Bitcoin price goes along with an increase in market mistrust over a collapse of Bitcoin. In addition, another study \cite{chen2020fear} shows that the coronavirus fear sentiment, captured by hourly Google search queries on coronavirus-related words, negatively impacted the Bitcoin price (negative Bitcoin returns and high trading volume).\\
\noindent
The above discussion shows that Bitcoin is mainly used as an investment tool, not a currency. The main determinants of the Bitcoin price are market forces of Bitcoin supply and demand and Bitcoin attractiveness for investors and users. We will analyze the daily return distribution of the bitcoin and S$\&$P 500 index and assess and compare their tail probabilities through two financial risk measures: the value-at-risk (VaR) and the average value-at-risk (AVaR). The findings will provide another perspective in understanding the distribution of the return and volatility of the bitcoin. As a methodology, We use Bitcoin and S$\&$P 500 Index daily return data over the 2010-2023 period to fit the General Tempered Stable (GTS) distribution to the underlying data return distribution. The GTS distribution is a seven-parameter family of infinitely divisible distribution, which covers several well-known distribution subclasses like Variance Gamma distributions \cite{madan1998variance, nzokem2022,nzokem_2021b,Nzokem_Montshiwa_2023}, bilateral Gamma distributions \cite{kuchler2013tempered, nzokem2023european} and CGMY distributions \cite{carr2003stochastic}. The main disadvantage of the GTS distribution is the lack of the closed form of the density, cumulative functions, and their derivatives. we use the numerical computations, which are based on the advanced Fast Fractional Fourier transform (FRFT) developed by combining the classic Fast Fractional Fourier (FRFT) algorithm and the 12-point rule Composite Newton-Cotes Quadrature.\\ 
\noindent 
We organized the paper as follows. Section 2 provides the trend and the volatility of Bitcoin and SP 500 Index daily price data. Section 3 briefly presents the GTS distribution’s theoretical framework. Section 4 develops the advanced Fast Fractional Fourier transform (FRFT) scheme. Section 5 shows the results of the GTS Parameter Estimation from daily return data. Section 6 analyses the probability density functions and some key statistics. Section 7 develops the methodology and computes the value-at-risk and average value-at-risk.

\section{Bitcoin and S\&P 500: Overview \& Trends}
\noindent 
The Standard $\&$ Poor’s 500 Composite Stock Price Index, also known as the S$\&$P 500, is a stock index that tracks the share prices of 500 of the largest public companies with stocks listed on the New York Stock Exchange (NYSE) and the Nasdaq in the United States. It was introduced in 1957 and often treated as a proxy for describing the overall health of the stock market or even the U.S. economy. The S$\&$P 500 data were extracted from Yahoo Finance. The historical prices span from April 28, 2013, to June 22, 2023, and were adjusted for splits and dividends.\\
The Bitcoin BTC price was extracted from CoinMarketCap and spanned from January 04, 2010, to June 16, 2023. The Bitcoin and S$\&$P 500 prices are denominated in US dollars, representing the currency in which Bitcoin and S$\&$P 500 are the most traded.

\subsection{Bitcoin’s Path Compared with the S$\&$P 500 Index's path}
\noindent
Bitcoin's price fluctuations primarily stem from investors and traders. Investors use Bitcoin to store value, generate wealth, and hedge against inflation, whereas traders use it to bet against its price changes. The Bitcoin historical price used in Fig \ref{fig01} starts on April 28, 2013, when Bitcoin was trading at $\$134$. By December 2013, the price had spiked to $\$1,151$ for the first and then fallen to $\$698$ three days later. In 2017, Bitcoin's price hovered around $\$1,000$ until it broke $\$2,000$ after mid-May and then skyrocketed to $\$19,497.40$ on Dec 16, 2027 (first peak in Fig \ref{fig01}). The increasing demand for Bitcoin has triggered the development of cryptocurrencies to compete with Bitcoin. In 2020, the COVID-19 pandemic resulted in massive disruption in the global economy. The government policies combined with investor and trader responses accelerated the dynamic of the Bitcoin price. The bitcoin price started at $\$7,200$ in early 2020; on November 30, 2020, Bitcoin was trading for $\$19,625$, and the price reached around $\$29,000$ at the end of December 2020. 
\begin{figure}[ht]
\vspace{-0.3cm}
    \centering
\hspace{-1cm}
  \begin{subfigure}[b]{0.45\linewidth}
    \includegraphics[width=\linewidth]{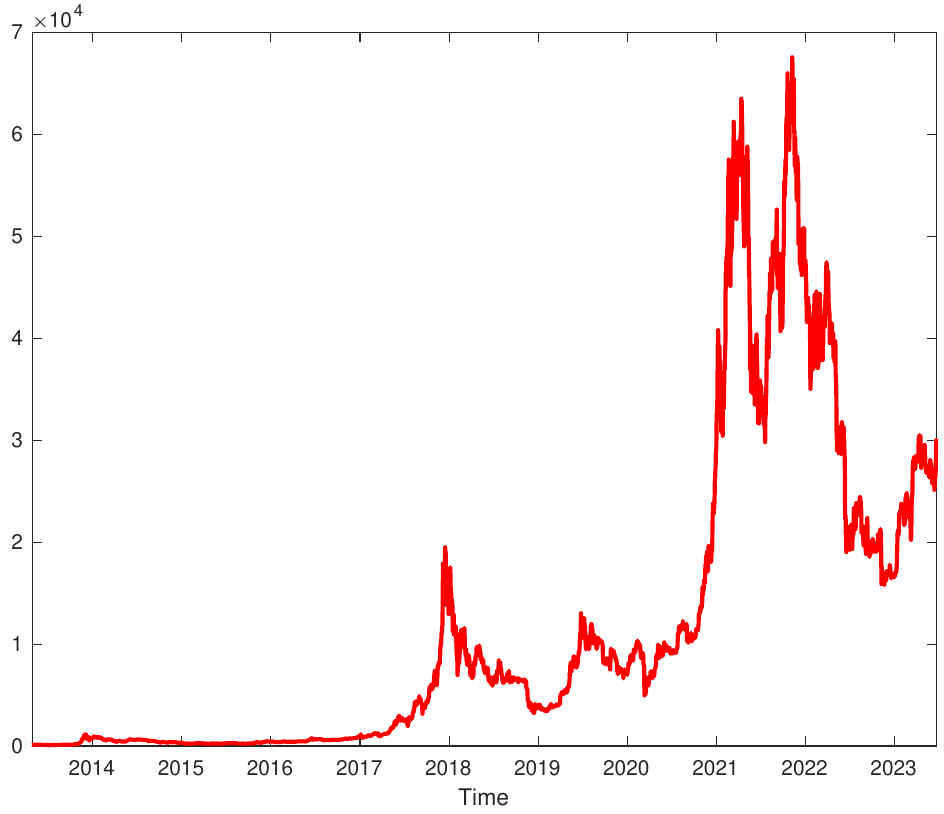}
\vspace{-0.6cm}
     \caption{Bitcoin Daily Price}
         \label{fig01}
  \end{subfigure}
  \begin{subfigure}[b]{0.45\linewidth}
    \includegraphics[width=\linewidth]{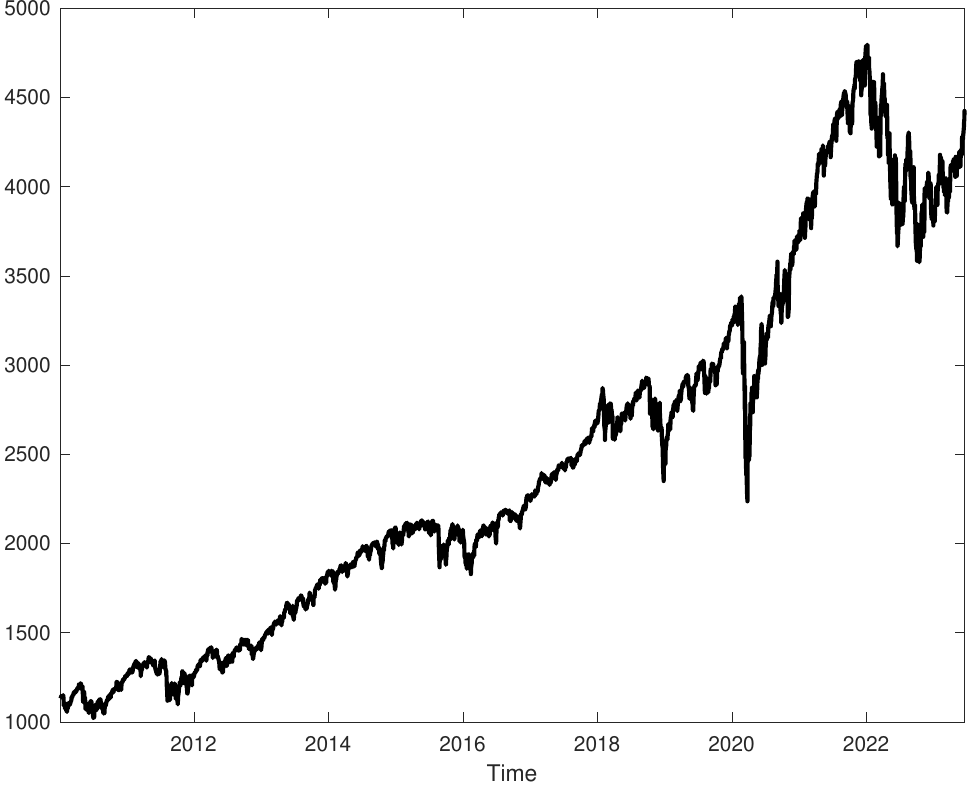}
\vspace{-0.6cm}
     \caption{S\&P 500 Daily Price}
         \label{fig02}
          \end{subfigure}
\vspace{-0.6cm}
  \caption{Daily Price}
  \label{fig1}
\vspace{-0.6cm}
\end{figure}
\noindent
The Bitcoin price increased by $14\%$ in January 2021 and reached $\$33,114$. By mid-April, Bitcoin prices reached the highest peak of $\$63,503$ on April 13, 2021 (second peak in Fig \ref{fig01}) before starting a decreasing process to reach $\$29,807$. On November 20, 2021, Bitcoin again achieved the highest value of $\$65,995$. In 2022, Bitcoin's price declined gradually, with prices only reaching $\$19,784$ at the end of June before falling further to $\$16547$ at the end of December. In 2023, Bitcoin price increased gradually and reached the value of $\$30,027$ on June 21, 2023 (third peak in Fig \ref{fig01}).\\
\noindent
In Contrast to Bitcoin price, Fig \ref{fig02} shows that S$\&$P 500 index smooth out price fluctuations. The S$\&$P 500 historical price used in Fig \ref{fig02} starts on January 04, 2010, when S$\&$P 500 index worth at $\$1132$. From 2010 to 2019, the U.S. economy was characterized by Stable economic growth and low-interest rates, which helped to keep equity prices on the rise. The S$\&$P 500 index has steadily increased and almost tripled the index value from $\$1132$ to $\$3230$ on December 31, 2019. In early 2020, many countries issued quarantines in which businesses were ordered to shut down due to the global spread of COVID-19; the negative impact of such policy on the economy has sent the S$\&$P 500 index into a tailspin, as shown in Fig \ref{fig02}. In fact, on February 19, 2020, the S$\&$P 500 closed at $\$3,386$, which was the highest value at that time. By March 23, 2020, the index plummeted to $\$2,237$, losing $34\%$ of its value. The S$\&$P 500 recovered the loss and continued a positive trend into 2021, reaching a peak of $\$4726$ on January 12, 2022. The index started a decreasing process and fell at the lower value of $\$3577$ on October 12, 2022, before increasing slightly to $\$4409$ on June 16, 2023.

\subsection{Bitcoin and S$\&$P 500 Index volatility over time}
\noindent
The volatility of bitcoin and the stock market index (S$\&$P 500) are produced and analyzed. The realized volatility measures the magnitude of daily price movements, regardless of direction, of some underlying, over a specific period. The realized volatility formula is provided in (\ref{eq:l31}) with T=month for the short term and T=year for the long term. Let the number of observations $m$, and the daily observed price $S_{j}$ on day $t_{j}$ with $j=1,\dots,m$.We have the following equations.
\begin{align}
 y_{j}=\log(S_{j}/S_{j-1}) \hspace{5mm}  \hbox{ $j=2,\dots,m$}  \quad \quad  \sigma^{2}_{k}=\frac{252}{T}\sum_{j=0}^{T} y^{2}_{k-j}  \hspace{5mm}  \hbox{ $k=T,\dots,m$} \label{eq:l31}
 \end{align}

\begin{figure}[ht]
\vspace{-0.3cm}
    \centering
\hspace{-1.5cm}
  \begin{subfigure}[b]{0.45\linewidth}
    \includegraphics[width=\linewidth]{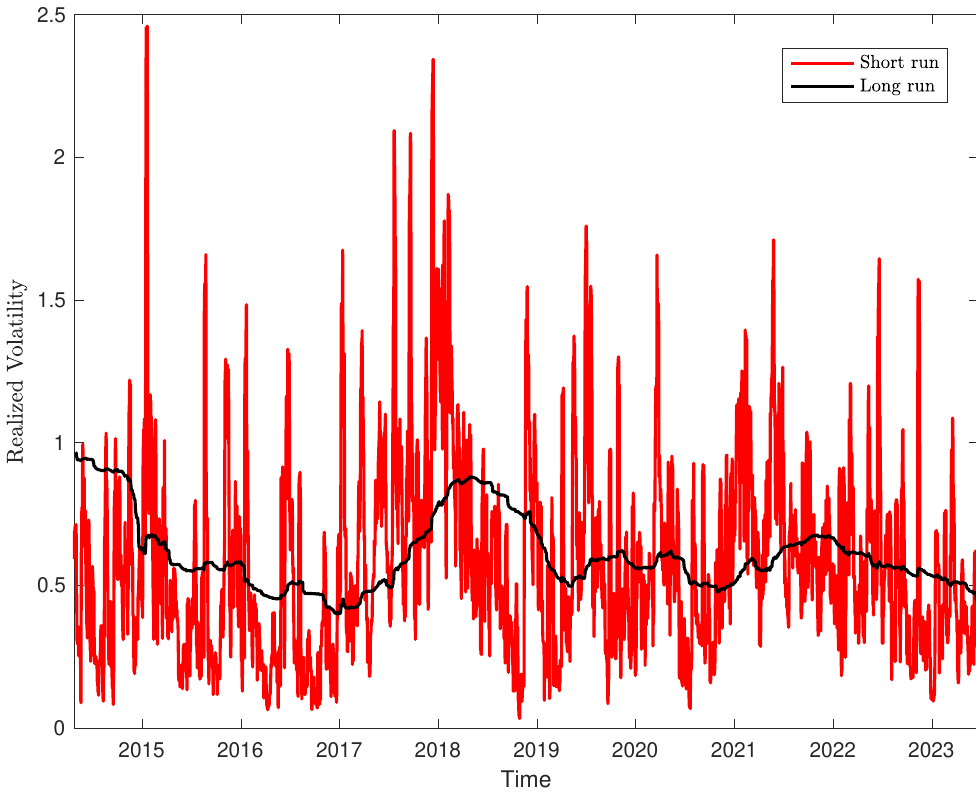}
\vspace{-0.6cm}
     \caption{Bitcoin Realized Volatility}
         \label{fig03}
  \end{subfigure}
\hspace{-0.4cm}
  \begin{subfigure}[b]{0.45 \linewidth}
    \includegraphics[width=\linewidth]{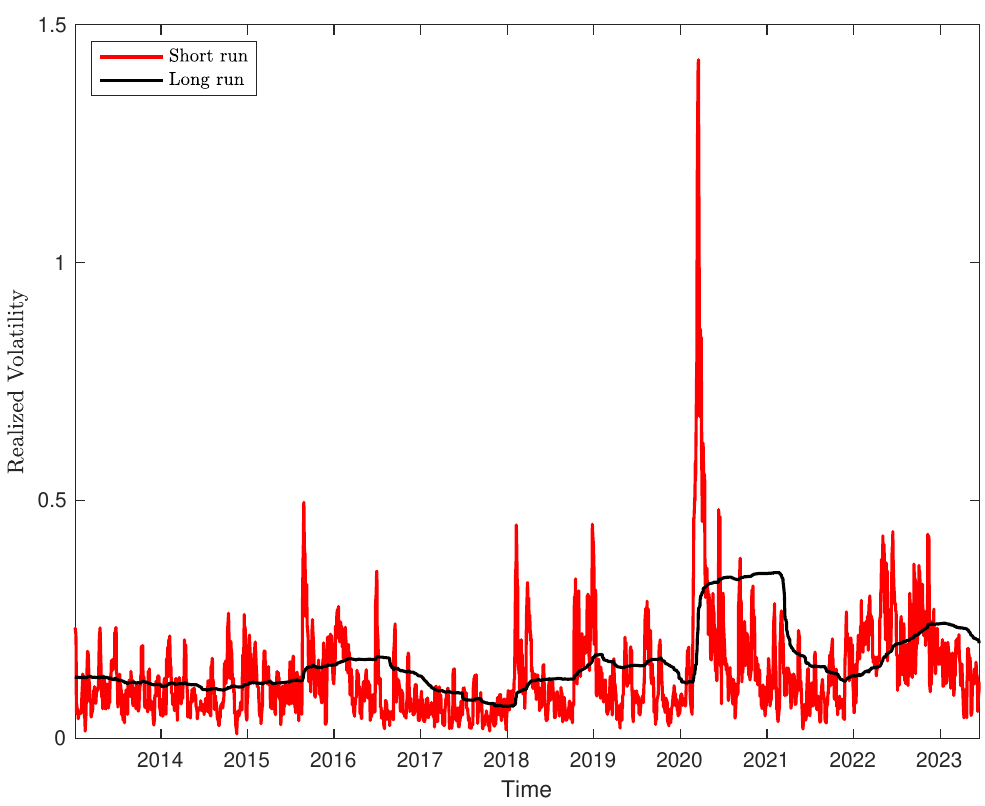}
\vspace{-0.6cm}
     \caption{S\&P 500 Index Realized Volatility}
         \label{fig04}
          \end{subfigure}
\vspace{-0.6cm}
  \caption{Realized Volatility}
  \label{fig3}
\vspace{-0.3cm}
\end{figure}

\noindent
The volatility represents both risk and opportunity for financial investments. Fig (\ref{fig03}) shows that Bitcoin is a highly volatile asset, and the volatility persists over time. The highly volatile security quickly hits new highs and lows and has rapid growths and sharp falls. In contrast, the S$\&$P 500 index is a lowly volatility asset. It has relatively stable price dynamics in the long run, as shown in Fig(\ref{fig04}). According to statistics used in Fig (\ref{fig3}), on average, Bitcoin is almost five times more volatile than the S$\&$P 500 index in the short run but four times more volatile in the long run.

\section{ Generalized Tempered Stable (GTS) Process: Overview}
\noindent
The L\'evy measure of the Generalized Tempered Stable (GTS) distribution ($V(dx)$) is defined (\ref{eq:l23}) as a product of a tempering function ($q(x)$) (\ref{eq:l21}) and a L\'evy measure of the $\alpha$-stable distribution ($V_{stable}(dx)$)(\ref{eq:l22}).
 \begin{align}
q(x) &= e^{-\lambda_{+}x} \boldsymbol{1}_{x>0} + e^{-\lambda_{-}|x|} \boldsymbol{1}_{x<0} \label{eq:l21}\\
V_{stable}(dx) &=\left(\frac{\alpha_{+}}{x^{1+\beta_{+}}} \boldsymbol{1}_{x>0} + \frac{\alpha_{-}}{|x|^{1+\beta_{-}}} \boldsymbol{1}_{x<0}\right) dx \label{eq:l22}\\
V(dx) =q(x)V_{stable}(dx)&=\left(\frac{\alpha_{+}e^{-\lambda_{+}x}}{x^{1+\beta_{+}}} \boldsymbol{1}_{x>0} + \frac{\alpha_{-}e^{-\lambda_{-}|x|}}{|x|^{1+\beta_{-}}} \boldsymbol{1}_{x<0}\right) dx \label{eq:l23}
 \end{align}
\noindent 
where $0\leq \beta_{+}\leq 1$, $0\leq \beta_{-}\leq 1$, $\alpha_{+}\geq 0$, $\alpha_{-}\geq 0$, $\lambda_{+}\geq 0$ and $\lambda_{-}\geq 0$. \\
The six parameters that appear have important interpretations.
 $\beta_{+}$ and $\beta_{-}$ are the indexes of stability bounded below by 0 and above by 2 \cite{borak2005stable}. They capture the peakedness of the distribution similarly to the $\beta$-stable distribution, but the distribution tails are tempered. If $\beta$ increases (decreases), then the peakedness decreases (increases). $\alpha_{+}$ and $\alpha_{-}$ are the scale parameters, also called the process intensity \cite{boyarchenko2002non}; they determine the arrival rate of jumps for a given size. $\lambda_{+}$ and $\lambda_{-}$ control the decay rate on the positive and negative tails. Additionally, $\lambda_{+}$ and $\lambda_{-}$ are also skewness parameters. If $\lambda_{+}>\lambda_{-}$ ($\lambda_{+}<\lambda_{-}$), then the distribution is skewed to the left (right), and if $\lambda_{+}=\lambda_{-}$, then it is symmetric \cite{rachev2011financial}. $\alpha$ and $\lambda$ are related to the degree of peakedness and thickness of the distribution. If $\alpha$ increases (decreases), the peakedness and the thickness decrease (increase). Similarly, If $\lambda$ increases (decreases), then the peakedness increases (decreases) and the thickness decreases (increases) \cite{bianchi2019handbook}.\\
\noindent
The GTS distribution can be denoted by $X\sim GTS(\textbf{$\beta_{+}$}, \textbf{$\beta_{-}$}, \textbf{$\alpha_{+}$},\textbf{$\alpha_{-}$}, \textbf{$\lambda_{+}$}, \textbf{$\lambda_{-}$})$ and $X=X_{+} - X_{-}$ with $X_{+} \geq 0$, $X_{-}\geq 0$. $X_{+}\sim TS(\textbf{$\beta_{+}$}, \textbf{$\alpha_{+}$},\textbf{$\lambda_{+}$})$ and $X_{-}\sim TS(\textbf{$\beta_{-}$}, \textbf{$\alpha_{-}$},\textbf{$\lambda_{-}$})$. 
\begin{align}
 \int_{-\infty}^{+\infty} V(dx) =\begin{cases}
  +\infty  &\text{if }{\beta_{+}\ge 0\vee\beta_{-} \ge 0}   \\
  \alpha_{+}{\lambda_{+}}^{\beta_{+}}\Gamma(-\beta_{+}) +  \alpha_{-}{\lambda_{-}}^{\beta_{-}}\Gamma(-\beta_{-}) &\text{if }{\beta_{+} < 0\wedge\beta_{-} < 0} \end{cases} \label{eq:l24}
     \end{align} 
From (\ref{eq:l24}), it results that when $\beta_{+} < 0$, TS(\textbf{$\beta_{+}$}, \textbf{$\alpha_{+}$},\textbf{$\lambda_{+}$}) is of finite activity and can be written as a Compound Poisson process on the right side ($X_{+}$). we have similar pattern when $\beta_{-} < 0$. However, when $0\le \beta_{+} \le 1$, $X_{+}$ is an infinite activity process with infinite jumps in any given time interval. We have a similar pattern when $0 \le \beta_{-} \le 1$. In addition to the infinite activities process, we have 
\begin{align}
  \int_{-\infty}^{+\infty} min(1,|x|)V(dx) <+ \infty \label{eq:l25} 
\end{align}
\noindent
By adding the location parameter, the GTS distribution becomes GTS(\textbf{$\mu$}, \textbf{$\beta_{+}$}, \textbf{$\beta_{-}$}, \textbf{$\alpha_{+}$},\textbf{$\alpha_{-}$}, \textbf{$\lambda_{+}$}, \textbf{$\lambda_{-}$}) and we have:
 \begin{align}
Y=\mu + X=\mu + X_{+} - X_{-} \quad \quad  Y\sim GTS(\mu, \beta_{+}, \beta_{-}, \alpha_{+}, \alpha_{-},\lambda_{+}, \lambda_{-}) \label {eq:l26}
  \end{align}

\begin{theorem}\label{lem5} \ \\ 
Consider a variable $Y \sim GTS(\textbf{$\mu$}, \textbf{$\beta_{+}$}, \textbf{$\beta_{-}$}, \textbf{$\alpha_{+}$},\textbf{$\alpha_{-}$}, \textbf{$\lambda_{+}$}, \textbf{$\lambda_{-}$})$, the characteristic exponent can be written
  \begin{align}
\Psi(\xi)=\mu\xi i + \alpha_{+}\Gamma(-\beta_{+})\left((\lambda_{+} - i\xi)^{\beta_{+}} - {\lambda_{+}}^{\beta_{+}}\right) + \alpha_{-}\Gamma(-\beta_{-})\left((\lambda_{-} + i\xi)^{\beta_{-}} - {\lambda_{-}}^{\beta_{-}}\right) \label {eq:l27}
  \end{align}
\end{theorem}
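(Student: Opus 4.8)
The plan is to read off the characteristic exponent directly from the L\'evy--Khintchine representation, using that the L\'evy measure $V(dx)$ in (\ref{eq:l23}) is fully explicit and that, by (\ref{eq:l25}), the jumps are summable near the origin. Concretely, in the regime $0\le\beta_\pm<1$ the one-sided components $X_+$ and $X_-$ are tempered-stable \emph{subordinators} of finite variation, since $\int_0^1 x\,x^{-1-\beta_\pm}\,dx<\infty$; hence no compensator is needed and each has characteristic exponent given by an uncompensated jump integral. First I would use $Y=\mu+X_+-X_-$ from (\ref{eq:l26}) together with the independence of $X_+$ and $X_-$ to write
\begin{align*}
\Psi(\xi)=i\mu\xi+\int_0^{\infty}\bigl(e^{i\xi x}-1\bigr)\frac{\alpha_+e^{-\lambda_+x}}{x^{1+\beta_+}}\,dx+\int_0^{\infty}\bigl(e^{-i\xi x}-1\bigr)\frac{\alpha_-e^{-\lambda_-x}}{x^{1+\beta_-}}\,dx,
\end{align*}
where the contribution of $X_-$ has been folded onto $(0,\infty)$ by the substitution $x\mapsto-x$, which turns $e^{i\xi x}$ into $e^{-i\xi x}$.

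The analytic core of the argument is the single scalar identity
\begin{align*}
\int_0^{\infty}\bigl(e^{-ax}-1\bigr)x^{-1-\beta}\,dx=\Gamma(-\beta)\,a^{\beta},\qquad \RE a>0,\ 0<\beta<1,
\end{align*}
which I would prove by analytically continuing the Euler integral $\int_0^\infty x^{s-1}e^{-ax}\,dx=\Gamma(s)a^{-s}$ into the strip $-1<\RE s<0$. Subtracting the constant $1$ makes the integrand $O(x^{-\beta})$ near $0$ and $O(x^{-1-\beta}e^{-ax})$ at infinity, so the integral converges absolutely; a single integration by parts with vanishing boundary terms then reduces it to $\tfrac{a}{s}\Gamma(s+1)a^{-s-1}=\Gamma(s)a^{-s}$, and setting $s=-\beta$ gives the claim. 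Applying this with $a=\lambda_+-i\xi$ and with $a=\lambda_+$ and subtracting yields
\begin{align*}
\int_0^{\infty}\bigl(e^{i\xi x}-1\bigr)\frac{e^{-\lambda_+x}}{x^{1+\beta_+}}\,dx=\Gamma(-\beta_+)\Bigl((\lambda_+-i\xi)^{\beta_+}-\lambda_+^{\beta_+}\Bigr),
\end{align*}
and the symmetric choice $a=\lambda_-+i\xi$, $a=\lambda_-$ gives the analogous expression for the $X_-$ term. Multiplying by $\alpha_\pm$ and inserting both into the displayed exponent produces (\ref{eq:l27}).

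The step I expect to demand the most care is the specification of the complex power. Because $\RE(\lambda_+-i\xi)=\lambda_+>0$, the base $\lambda_+-i\xi$ never leaves the open right half-plane, so $(\lambda_+-i\xi)^{\beta_+}$ is unambiguous under the principal branch and coincides with the continuation of $a^\beta$ used in the identity above; I would state this branch convention explicitly to rule out the sign ambiguity that a careless choice of argument would introduce, and likewise for $\lambda_-+i\xi$. A secondary issue is the endpoints $\beta_\pm\in\{0,1\}$ permitted by the parameter constraints, where $\Gamma(-\beta_\pm)$ has poles: these degenerate cases, which recover the bilateral Gamma and Variance Gamma limits noted in the introduction, are not covered by the formula verbatim and should be obtained as the $\beta\to0$ or $\beta\to1$ limit of (\ref{eq:l27}) via a separate limiting argument.
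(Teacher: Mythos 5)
Your proof is correct and follows the standard route that the paper's cited source takes: the L\'evy--Khintchine representation in the finite-variation regime $0\le\beta_{\pm}<1$ (justified by (\ref{eq:l25})), followed by the analytically continued Gamma integral $\int_0^\infty\bigl(e^{-ax}-1\bigr)x^{-1-\beta}\,dx=\Gamma(-\beta)\,a^{\beta}$ applied at $a=\lambda_{+}-i\xi$, $a=\lambda_{+}$, $a=\lambda_{-}+i\xi$, $a=\lambda_{-}$. The paper itself defers the proof of Theorem~\ref{lem5} to \cite{nzokem2022fitting}, which carries out essentially this same computation, so there is nothing to reconcile; your explicit attention to the principal branch of the complex power and to the degenerate endpoints $\beta_{\pm}\in\{0,1\}$, where $\Gamma(-\beta_{\pm})$ has poles and the formula must be read as a limit, is a correct refinement rather than a deviation.
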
 
See \cite{nzokem2022fitting} for theorem \ref{lem5} proof \\

 \begin{theorem}\label{lem7} (Cumulants $\kappa_{k}$)\\
 Consider a variable $Y \sim GTS(\textbf{$\mu$}, \textbf{$\beta_{+}$}, \textbf{$\beta_{-}$}, \textbf{$\alpha_{+}$},\textbf{$\alpha_{-}$}, \textbf{$\lambda_{+}$}, \textbf{$\lambda_{-}$})$. The cumulants $\kappa_{k}$ of the Generalized Tempered Stable distribution are defined as follows.
  \begin{align}
\kappa_{1}= \mu + \alpha_{+}{\frac{\Gamma(1-\beta_{+})}{\lambda_{+}^{1-\beta_{+}}}} - \alpha_{-}{\frac{\Gamma(1-\beta_{-})}{\lambda_{-}^{1-\beta_{-}}}} \quad
 \kappa_{k}=\alpha_{+}{\frac{\Gamma(k-\beta_{+})}{\lambda_{+}^{k-\beta_{+}}}} + (-1)^{k} \alpha_{-}{\frac{\Gamma(k-\beta_{-})}{\lambda_{-}^{k-\beta_{-}}}} \label {eq:l28}  \end{align}
 \end{theorem}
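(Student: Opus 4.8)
The plan is to read the cumulants straight off the characteristic exponent \eqref{eq:l27} furnished by Theorem \ref{lem5}. Since $\Psi(\xi)=\log\E\bigl[e^{i\xi Y}\bigr]$ is the logarithm of the characteristic function, its Taylor series at the origin is by definition the cumulant generating series
\begin{align}
\Psi(\xi)=\sum_{k\ge 1}\kappa_{k}\frac{(i\xi)^{k}}{k!},
\end{align}
so that $\kappa_{k}=i^{-k}\Psi^{(k)}(0)=(-i)^{k}\Psi^{(k)}(0)$. First I would therefore reduce the whole problem to computing $\Psi^{(k)}(0)$. The drift term $\mu i\xi$ feeds only $\kappa_{1}$, and the two constants ${\lambda_{\pm}}^{\beta_{\pm}}$ vanish under any differentiation, so everything rests on the two power terms $(\lambda_{+}-i\xi)^{\beta_{+}}$ and $(\lambda_{-}+i\xi)^{\beta_{-}}$.

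Next I would differentiate these powers $k$ times by the chain rule. For the positive part this yields
\begin{align}
\frac{\dm^{k}}{\dm\xi^{k}}(\lambda_{+}-i\xi)^{\beta_{+}}=(-i)^{k}\Bigl(\textstyle\prod_{j=0}^{k-1}(\beta_{+}-j)\Bigr)(\lambda_{+}-i\xi)^{\beta_{+}-k},
\end{align}
and an analogous expression carrying $(+i)^{k}$ for the negative part. Evaluating at $\xi=0$ turns the power factor into $\lambda_{\pm}^{\beta_{\pm}-k}=\lambda_{\pm}^{-(k-\beta_{\pm})}$, which already supplies the denominators $\lambda_{\pm}^{k-\beta_{\pm}}$ appearing in \eqref{eq:l28}.

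The key algebraic step is to collapse the falling factorial multiplying $\Gamma(-\beta_{\pm})$ into a single Gamma value. Writing $\beta_{\pm}-j=-(-\beta_{\pm}+j)$ and applying the recursion $\Gamma(z+1)=z\Gamma(z)$ repeatedly gives the identity
\begin{align}
\Gamma(-\beta_{\pm})\prod_{j=0}^{k-1}(\beta_{\pm}-j)=(-1)^{k}\,\Gamma(k-\beta_{\pm}),
\end{align}
exactly what is needed to convert $\alpha_{\pm}\Gamma(-\beta_{\pm})$ into $\alpha_{\pm}\Gamma(k-\beta_{\pm})$. Combining this with the prefactor $(-i)^{k}$ from the extraction formula and the $(\mp i)^{k}$ from the chain rule, I would check that $(-i)^{k}(-i)^{k}=(-1)^{k}$ for the positive part and $(-i)^{k}(i)^{k}=1$ for the negative part; the former cancels the $(-1)^{k}$ produced by the Gamma identity, leaving a clean positive contribution, while the latter lets the alternating sign $(-1)^{k}$ survive on the negative term. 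This reproduces the general formula for $\kappa_{k}$ in \eqref{eq:l28}.

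Finally I would treat $\kappa_{1}$ on its own, as it is the only cumulant fed by the drift: $\Psi'(0)$ now picks up $\mu i$ alongside the two power contributions, and multiplying by $-i$ recovers $\mu+\alpha_{+}\Gamma(1-\beta_{+})\lambda_{+}^{-(1-\beta_{+})}-\alpha_{-}\Gamma(1-\beta_{-})\lambda_{-}^{-(1-\beta_{-})}$, using $\Gamma(1-\beta_{\pm})=-\beta_{\pm}\Gamma(-\beta_{\pm})$. I expect the main obstacle to be not any single computation but the disciplined bookkeeping of the powers of $i$ together with the signs coming out of the Gamma recursion; arranging for them to cancel correctly---symmetrically for the right tail, yet leaving the surviving $(-1)^{k}$ for the left tail---is where an error is most likely to creep in.
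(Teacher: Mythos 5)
Your proposal is correct: the sign bookkeeping works out exactly as you anticipate, since $(-i)^k(-i)^k=(-1)^k$ cancels the $(-1)^k$ from the identity $\Gamma(-\beta_{+})\prod_{j=0}^{k-1}(\beta_{+}-j)=(-1)^k\Gamma(k-\beta_{+})$ on the positive side while $(-i)^k(i)^k=1$ preserves it on the negative side, and the $k=1$ case with the drift follows from $\Gamma(1-\beta_{\pm})=-\beta_{\pm}\Gamma(-\beta_{\pm})$. The paper itself gives no inline proof (it defers to the cited reference \cite{nzokem2022fitting}), but reading the cumulants off the Taylor expansion of the characteristic exponent of Theorem~\ref{lem5} is precisely the intended route, so your derivation matches the paper's approach; the only point left implicit is that $\lambda_{\pm}>0$ makes $\Psi$ analytic near $\xi=0$, which legitimizes the term-by-term differentiation.
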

 See \cite{nzokem2022fitting} for theorem \ref{lem7} proof \\
 
 \noindent
The characteristic function, Fourier Transform ($F(f)$) and density function ($f$) of the GTS process $Y$ can be written as follows
\begin{align} 
F[f](\xi) =e^{\Psi(-\xi)} \quad \quad   f(y)= \frac{1}{2\pi}\int_{-\infty}^{+\infty}e^{iy x}F[f](x)dx  \label {eq:l3}
 \end{align}
Given the parameters $V=(\textbf{$\mu$}, \textbf{$\beta_{+}$}, \textbf{$\beta_{-}$}, \textbf{$\alpha_{+}$},\textbf{$\alpha_{-}$}, \textbf{$\lambda_{+}$}, \textbf{$\lambda_{-}$})$, we have the first and second order derivative of the function can be deduced from equation (\ref{eq:l3}).
\begin{align} 
 \frac{df(y)}{dV_{j}}= \frac{1}{2\pi}\int_{-\infty}^{+\infty}e^{iy x} \frac{dF(x)}{dV_{j}}dx \quad \quad  \frac{d^{2}f(y)}{dV_{k}dV_{j}}= &\frac{1}{2\pi}\int_{-\infty}^{+\infty}e^{iy x} \frac{d^{2}F(x)}{dV_{k}dV_{j}}dx \label {eq:l3}
 \end{align}
 
\section{Methodology: Review of Advanced FRFT - Based Scheme}
\subsection{Fast Fourier Transform and Fractional Fourier Transform }
\noindent
The Conventional fast Fourier transform (FFT) algorithm is widely used to compute discrete convolutions, discrete Fourier transforms (DFT) of sparse sequences, and to perform high-resolution trigonometric interpolation \cite {bailey1991fractional}. The discrete Fourier transforms (DFT) are based on n-th roots of unity $e^{-\frac{2\pi i}{n}}$. The generalization of DFT is the fractional Fourier transform, which is based on fractional roots of unity $e^{- 2\pi i\alpha}$, where $\alpha$ is an arbitrary complex number.\\
\noindent
The fractional Fourier transform is defined on $m$-long sequence ($x_1$, $x_ {2} $, \dots, $x_{m}$) as follows

 \begin{equation}
 G_{k+s}(x,\delta)=\sum_{j=0}^{m-1}\! x_{j}e^{-2\pi i j(k+s)\delta} \hspace{5mm}
   \hbox{$0\leq k<m$ \& $0\leq s\leq 1$}
   \label {eq:l31}
\end{equation}  
\noindent
Let us have $2j(k+s)=j^2 + (k+s)^2  - (k-j+s)^2$ , equation (\ref{eq:l31}) becomes
 \begin{equation}
 \begin{aligned}
 G_{k+s}(x,\delta)&=\sum_{j=0}^{m-1}\! x_{j}e^{-\pi i (j^2 + (k+s)^2  - (k-j+s)^2) \delta}  =e^{-\pi i (k+s)^2 }\sum_{j=0}^{m-1}\! x_{j}e^{-\pi i j^2 \delta}e^{\pi i (k-j+s)^2 \delta}\\\label {eq:l4}
&=e^{-\pi i (k+s)^2 }\sum_{j=0}^{m-1}\! y_{j}z_{k-j} \hspace{10mm}
   \hbox{$y_{j}=x_{j}e^{-\pi i j^2 \delta}$ \& $z_{j}=e^{\pi i (j+s)^2 \delta}$}
  \end{aligned}
\end{equation}
\noindent
The expression $ \sum_{j=0}^{m-1}\! y_{j}z_{k-j}$ is a discrete convolution. Still, we need a circular convolution (i.e., $z_{k-j}=z_{k-j+m}$) to evaluate $G_{k+s}(x,\delta)$. The conversion from discrete convolution to discrete circular convolution is possible by extending the sequences $y$ and $z$ to length $2m$ defined as follows.
\begin{equation}
 \begin{aligned}
y_j &= e^{-2\pi ij^2\delta}   &\quad z_j &= e^{-2\pi i(j+s)^2\delta}      &\quad  \hspace{5mm}  \hbox{$0\leq k<m$}  \label{eq:l5}\\
y_j&=0   &\quad z_j&=e^{-2\pi i(j+s-2m)^2\delta}  &\quad  \hspace{5mm}
   \hbox{$m\leq k<2m$}
  \end{aligned}
\end{equation}

\noindent
Taking into account the 2m-long sequence, the previous fractional Fourier transform becomes
 \begin{equation}
 G_{k+s}(x,\delta)=e^{-\pi i (k+s)^2 }\sum_{j=0}^{2m-1}\! y_{j}z_{k-j}=e^{-\pi i (k+s)^2 }{DFT}_k^{-1}[{{DFT}_j(y){DFT}_j(z)}]    \label {eq:l6}
\end{equation}  
\noindent
This procedure is referred to in the literature \cite {bailey1991fractional} as the Fast fractional Fourier Transform Algorithm with a total computational cost of $20mlog_{2}m + 44m$ operations.
\subsection{Fast Fourier Transform Algorithm and Direct Integration Methods} 
\noindent
We assume that $\scrF[f](y)$ is zero outside the interval $[-\frac{a}{2}, \frac{a}{2}]$; $\beta=\frac{a}{m} $ is the step size of the $m$ input values of $\scrF[f](y)$, defined by $y_{j}=(j-\frac{m}{2}) \beta$ for $ 0 \leq j <m$. Similarly, $\gamma$ is the step size of the $m$ output values of $f(t)$, defined by $x_{k}=(k-\frac{m}{2}) \gamma$ for $ 0 \leq k <m$.
 By choosing the step size $\beta$ on the input side and the step size $\gamma$ in the output side, we fix the FRFT parameter $\delta=\frac{\beta\gamma}{2\pi}$ and yield \cite{nzokem2021fitting} the density function $f$ (\ref{eq:l3}) at $x_k$.
 
  \begin{align}
 \hat{f}(x_{k+s}) = \frac{\gamma} {2\pi}e^{-\pi i(k+s-\frac{n}{2})n\delta}G_{k+s}(\scrF[f](y_{j})e^{-\pi i jn\delta}),-\delta) \hspace{5mm}
   \hbox{ $0\leq s <1$}
 \label {eq:l7}
 \end{align}  
\noindent
The numerical integration of functions is another method to evaluate the inverse Fourier integrals called the Direct Integration Method. One of the sophisticated procedures is the Newton–Cotes quadrature rules, where the interval is approximated by some interpolating polynomials, usually in Lagrange form. The 12-point rule Composite Newton-Cotes Quadrature can be implemented and provides greater accuracy. The error analysis shows that the global error $O(h^{13})$\cite{aubain2020, nzokem_2021,nzokem2021sis}. \\
\noindent
We assume $Q=12$ and $m=Qn$; $\beta=\frac{a}{m} $ is the step size of the $m$ input values $\scrF[f](y)$, defined by $y_{j+Qp}=(Qp+ j - \frac{m}{2}) \beta$ for $0 \leq p <n$ and $0 \leq j <Q$. Similarly, the output values of $f(x)$ is defined by $x_{Ql+f+s}=(Ql+f+s-\frac{m}{2}) \gamma$ for $0 \leq l <n$, $0 \leq f <Q$ and $0\leq s\leq 1$.

\begin{align}
 \hat{f}(x_{Ql+f+s}) = \beta \sum_{p=0}^{n-1} \sum_{j=0}^{Q} W_{j} e^{(iy_{j+Qp} x_{f+Ql+s})}F[f](y_{j+Qp}) \label {eq:l71} 
\end{align}
\noindent
The weight value $\{W_j\}_{0\leq j \leq Q}$ and details regarding the methodology are developed in \cite{aubain2020, nzokem_2021,nzokem2021sis}.

\subsection{ Advanced Fast Fourier Transform (FRFT) Algorithm}
\noindent
The Advanced Fast Fourier Transform (FRFT) algorithm combines the Fast Fractional Fourier (FRFT) algorithm (\ref{eq:l7}) and the 12-point rule Composite Newton-Cotes Quadrature (\ref{eq:l71}) to evaluate the inverse Fourier integrals.\\

\noindent
We assume that $\scrF[f](x)$ is zero outside the interval $[-\frac{a}{2}, \frac{a}{2}]$, $Q=12$, $m=Qn$ and $\beta=\frac{a}{m} $ is the step size of the $m$ input values $\scrF[f](y)$, defined by $y_{j+Qp}=(Qp+ j - \frac{m}{2}) \beta$ for $0 \leq p <n$ and $0 \leq j <Q$. Similarly, the output values of $f(x)$ is defined by $x_{Ql+f+s}=(Ql+f+s-\frac{m}{2}) \gamma$ for $0 \leq l <n$, $0 \leq f <Q$ and $0\leq s\leq 1$.

 \begin{equation} 
 \begin{aligned}
f(x_{Ql+f+s})&=\frac{1}{2\pi}\int_{\infty}^{+\infty}e^{i y x_{Ql+f+s}} F[f](y)dy = \frac{1}{2\pi}\int_{-a/2}^{a/2}e^{i y x_{Ql+f+s}} F[f](y)dy\\
 &= \frac{1}{2\pi}\sum_{p=0}^{n-1} \int_{y_{Qp}}^{y_{Qp + Q}}e^{i y x_{Ql+f+s}} F[f](y)dy \hspace{2mm} \hbox{ (composite rule)} \label {eq:l01}
\end{aligned}
\end{equation}

\noindent
Based on the Lagrange interpolating integration over [$x_{Qp}$, $x_{Qp+Q}$]\cite{aubain2020, nzokem_2021}, we have the following expression.
\begin{equation} 
\int_{x_{Qp}}^{x_{Qp + Q}}{e^{i y x_{Ql+f+s}}F[f](x)}dx  \approx \beta\sum_{j=0}^{Q} w_{j} {e^{i y x_{Ql+f+s}} F[f](x_{j + Qp})} \\
 \end{equation}
 
\noindent
we consider $\widehat{f}(x_{Ql+f+s})$ the approximation of $f(x_{Ql+f+s})$. It results that the expression (\ref{eq:l01}) becomes
 \begin{equation} 
 \begin{aligned}
\widehat{f}(x_{Ql+f+s}) &=\frac{\beta}{2\pi}\sum_{p=0}^{n-1} \sum_{j=0}^{Q} w_{j} F[f](y_{j + Qp})e^{i x_{Ql+f+s} y_{j + Qp}}\\
&= \frac{\beta}{2\pi}\sum_{j=0}^{Q}w_{j} \sum_{p=0}^{n-1}{F[f](y_{j + Qp})e^{2\pi i\delta (Ql+f+s-\frac{m}{2}) (Qp+ j - \frac{m}{2})}} \hspace{5mm}
   \hbox{ $\beta \gamma=2\pi\delta$}\\
&= \frac{\beta}{2\pi}e^{-\pi i\delta m(Ql+f+s-\frac{m}{2})}\sum_{j=0}^{Q}w_{j} G_{l+\frac{f+s}{Q}}(\xi_{p},\delta Q^{2})e^{2\pi i\delta (Ql-\frac{m}{2})j}e^{2\pi i\delta (f+s)j}
\end{aligned}
\end{equation}
\noindent
We have the first fast fractional Fourier transform (FRFT) to perform on $\{\xi_{p}\}_{0\leq p < n}$ 
 \begin{equation} 
 \begin{aligned}
G_{l+\frac{f+s}{Q}}(\xi_{p},-\delta Q^{2})&=\sum_{p=0}^{n-1}{y_{p}e^{-2\pi  i (l+\frac{f+s}{Q})p\delta Q^{2}}} \quad \quad \xi_{p}=e^{-\pi i m p Q \delta}F[f](y_{j + Qp})
\end{aligned}
\end{equation}

 \begin{equation} 
 \begin{aligned}
\widehat{f}(x_{Ql+f+s}) &= \frac{\beta}{2\pi}e^{-\pi i\delta m(Ql+f+s-\frac{m}{2})}\sum_{j=0}^{Q}w_{j} G_{l+\frac{f+s}{Q}}(\xi_{p},\delta Q^{2})e^{2\pi i\delta (Ql-\frac{m}{2})j}e^{2\pi i\delta (f+s)j}\\
&= \frac{\beta}{2\pi}e^{-\pi i\delta m(Ql+f+s-\frac{m}{2})}G_{f+s}(z_{j},\delta)
\end{aligned}
\end{equation}
\noindent
We have the second fast fractional Fourier transform (FRFT) to perform on $\{z_{j}\}_{0\leq j \leq Q}$ 
 \begin{equation} 
 \begin{aligned}
G_{f+s}(z_{j},-\delta)&=\sum_{j=0}^{Q}z_{j}e^{-2\pi i\delta (f+s)j} \quad \quad z_{j}=w_{j} G_{l+\frac{f+s}{Q}}(y_{p},\delta Q^{2})e^{2\pi i\delta (Ql-\frac{m}{2})j}
\end{aligned}
\end{equation}
The advanced FRFT - scheme yields the following approximation
 \begin{equation} 
\widehat{f}(x_{Ql+f+s}) = \frac{\beta}{2\pi}e^{-\pi i\delta m(Ql+f+s-\frac{m}{2})}G_{f+s}(z_{j},\delta)
\end{equation}
\section{Fitting General Tempered Stable Distribution Results}
\subsection{Review of the Maximum Likelihood Method} \label{sect1}
\noindent
From a probability density function $f(x, V)$ with parameter $V$ of size $p=7$ and the sample data $X$ of size $m$,  we definite  the Likelihood function and its derivatives as follows: 
\begin{align}
 L(x,V) &= \prod_{j=1}^{m} f(x_{j},V) \quad & 
 l(x,V) &= \sum_{j=1}^{m} log(f(x_{j},V))  \label {eq:l32} 
  \end{align}
 
\begin{align}
 \frac{dl(x,V)}{dV_j} &= \sum_{i=1}^{m} \frac{\frac{df(x_{i},V)}{dV_j}}{f(x_{i},V)} \quad &
  \frac{d^{2}l(x,V)}{dV_{k}dV_{j}} &= \sum_{i=1}^{m} \left(\frac{\frac{d^{2}f(x_{i},V)}{dV_{k}dV_{j}}}{f(x_{i},V)}- \frac{\frac{df(x_{i},V)}{dV_{k}}}{f(x_{i},V)}\frac{\frac{df(x_{i},V)}{dV_j}}{f(x_{i},V)}\right) \label {eq:l35}
 \end{align}
\noindent
To perform the Maximum of the likelihood function (\ref{eq:l32}), the quantities  $\frac{df(x, V)}{dV_j}$ and $\frac{d^{2}f(x, V)}{dV_{k}dV_{j}}$ in (\ref{eq:l35}) are the first and second order derivatives of the probability density and should be computed with high accuracy.
\noindent
The quantities $\frac{d^{2}l(x, V)}{dV_{k}dV_{j}}$ are critical in computing the Hessian Matrix and the Fisher Information Matrix.\\
\noindent
Given the parameter $V=(\textbf{$\mu$}, \textbf{$\beta_{+}$}, \textbf{$\beta_{-}$}, \textbf{$\alpha_{+}$},\textbf{$\alpha_{-}$}, \textbf{$\lambda_{+}$}, \textbf{$\lambda_{-}$})$ and the sample data set $X$, we derive from (\ref{eq:l35}) the following vector and matrix (\ref{eq:l36}) .
  \begin{align}
 I'(X,V) =\left(\frac{dl(x,V)}{dV_j}\right)_{1 \leq j \leq p}  \quad  \quad  I''(X,V) = \left(\frac{d^{2}l(x,V)}{dV_{k}dV_{j}}\right)_{\substack{{1 \leq k \leq p} \\ {1 \leq j \leq p}}} \label {eq:l36}
 \end{align}
\noindent
The advanced FRFT - scheme developed previously is used to compute the likelihood function (\ref{eq:l32}) and its derivatives (\ref{eq:l36}) in the optimization process. \\

\noindent
The local solution $V^{0}$ should meet the following requirements.
  \begin{align}
 I'(x,V^{0})=0 \quad \quad U^{T}\mathbf{I''(X,V^{0})}U \leq 0\hspace{5mm}  \hbox{ $\forall U \in \mathbb{R}^{p}$}\label{eq:l37}
 \end{align}
The inequality in (\ref{eq:l37}) is met when $I''(x, V^{0})$ is a negative semi-definite matrix.\\
\noindent
The Newton-Raphson Iteration Algorithm provides the numerical solution (\ref{eq:l38}). 
  \begin{align}
V^{n+1}=V^{n}-{\left(I''(x,V^{n})\right)^{-1}}I'(x,V^{n})\label{eq:l38}
 \end{align}
\noindent
See \cite{giudici2013wiley} for details on Maximum likelihood and Newton-Raphson Iteration procedure.
\newpage
\subsection{GTS Parameter Estimation from S$\&$P 500 index}
\noindent 
The results of the GTS Parameter Estimation from S$\&$P 500 return data are reported in Table \ref{tab1}. As expected, the stability indexes (\textbf{$\beta_{-}$},\textbf{$\beta_{+}$}), the process intensities (\textbf{$\alpha_{-}$},\textbf{$\alpha_{+}$}), and the decay rate (\textbf{$\lambda_{-}$},\textbf{$\lambda_{+}$}) are all positive. In addition, the results show $0\le \beta_{+} \le 1$ and $0\le \beta_{-} \le 1$; therefore, the positive S$\&$P 500 rate of return ($X_{+}$) and the negative S$\&$P 500 rate of return ($X_{-}$) are infinite activity processes, which means processes with an infinite number of jumps in any given time interval. 
\begin{table}[ht]
\vspace{-0.3cm}
\caption{ FRFT Maximum Likelihood GTS Parameter Estimation}
\label{tab1} 
\vspace{-0.3cm}
\centering
\begin{tabular}{@{}lccccccc@{}}
\toprule
\textbf{Model} & \textbf{$\mu$} & \textbf{$\beta_{+}$} & \textbf{$\beta_{-}$} & \textbf{$\alpha_{+}$} & \textbf{$\alpha_{-}$}  & \textbf{$\lambda_{+}$}  & \textbf{$\lambda_{-}$}  \\ \midrule
\textbf{GTS} & -0.693477 & 0.682290 & 0.242579 & 0.458582 & 0.414443 & 0.822222 & 0.727607  \\ \bottomrule 
\end{tabular}
\vspace{-0.2cm}
\end{table} 

\noindent
As shown in Table \ref{tab1}, We have \textbf{$\alpha_{-}$} $\leq$ \textbf{$\alpha_{+}$} and the higher arrival rate of jump on the right side contributes to the increasing global trend of the S$\&$P 500 daily price in Fig \ref{fig02}. Regarding the skewness parameters, we have \textbf{$\lambda_{-}$} $\leq$ \textbf{$\lambda_{+}$} and the S$\&$P 500 return is a bit left-skewed distribution. The parameter analysis shows that the tail distribution is thicker on the negative side of the S$\&$P 500 return distribution ($X_{-}$) than on the positive side of the distribution.\\

\noindent
 The implementation of the maximum likelihood method (in \ref{sect1}) was made possible by the numerical computations of the GTS probability density function $f(x, V)$ and its first and second derivative functions by the advanced FRFT - scheme developed previously in section 4. The results of the iteration process are reported in Table \ref{tab2}. each raw has eleven columns made of the iteration number, the seven parameters \textbf{$\mu$}, \textbf{$\beta_{+}$}, \textbf{$\beta_{-}$}, \textbf{$\alpha_{+}$},\textbf{$\alpha_{-}$}, \textbf{$\lambda_{+}$}, \textbf{$\lambda_{-}$} and three statistical indicators \textbf{$Log(ML)$}, \textbf{$||\frac{dLog(ML)}{dV}||$}, \textbf{$Max Eigen Value$}. The statistical indicators aim at checking if the two necessary and sufficient conditions described in equations(\ref{eq:l37}) are all met. For each iteration in Table \ref{tab2}, \textbf{$Log(ML)$} displays the value of the Naperian logarithm of the likelihood function $L(x, V)$ as described in (\ref{eq:l32}); \textbf{$||\frac{dLog(ML)}{dV}||$} displays the value of the norm of the first derivatives (\textbf{$\frac{dl(x, V)}{dV_j}$}) described in equations (\ref{eq:l35}); and \textbf{$Max Eigen Value$} displays the maximum value of the seven Eigenvalues generated by the Hessian Matrix (\textbf{$\frac{d^{2}l(x, V)}{dV_{k}dV_{j}}$}) as described in equations (\ref{eq:l36}).\\
 
\noindent
The Newton-Raphson Iteration Algorithm (\ref{eq:l38}) was implemented, and the results are reported in Table \ref{tab2}. As shown in Table \ref{tab2}, the log-likelihood (\textbf{$Log(ML)$}) value starts at $-4664.7647$ and increases to a limit of $-4659.1914$; the \textbf{$||\frac{dLog(ML)}{dV}||$} value starts at $289.206866$ to decreases to $0$; and the maximum value of the Eigenvalues (\textbf{$Max Eigen Value$}) start at $48.3287566$ and converge to $-1.4542632$, which is negative. At the convergent solution, the Hessian matrix in (\ref{eq:l36}) is a negative semi-definite matrix; both conditions in (\ref{eq:l37}) are met, and we have a locally optimal solution.
\begin{table}[ht]
\vspace{-0.3cm}
\centering
\caption{GTS Parameter Estimations from S$\&$P 500 index}
\label{tab2}
\vspace{-0.3cm}
\resizebox{13cm}{!}{%
\begin{tabular}{ccccccccccc}
\hline
\textbf{$Iterations$} & \textbf{$\mu$} & \textbf{$\beta_{+}$} & \textbf{$\beta_{-}$} & \textbf{$\alpha_{+}$} & \textbf{$\alpha_{-}$} & \textbf{$\lambda_{+}$} & \textbf{$\lambda_{-}$} & \textbf{$Log(ML)$} & \textbf{$||\frac{dLog(ML)}{dV}||$} & \textbf{$Max Eigen Value$} \\ \hline
1  & -0.5266543 & 0.67666185 & 0.43662658 & 0.43109407 & 0.32587754 & 0.85012595 & 0.60145985 & -4664.7647 & 289.206866 & 48.3287566 \\
2  & -0.5459715 & 0.67059496 & 0.42415971 & 0.44989796 & 0.34810329 & 0.80921924 & 0.60289955 & -4660.2156 & 35.9853332 & -6.0431046 \\
3  & -0.7108484 & 0.66676659 & 0.20613817 & 0.46963149 & 0.41246296 & 0.8368645  & 0.73679498 & -4663.5777 & 1082.00304 & 449.252496 \\
4  & -0.6704327 & 0.66891111 & 0.11250993 & 0.46528242 & 0.50028087 & 0.83422813 & 0.86330646 & -4660.5268 & 135.684998 & 15.1103621 \\
5  & -0.739816  & 0.66779246 & 0.09097181 & 0.48302677 & 0.45922815 & 0.85551132 & 0.81320574 & -4660.0208 & 45.7082317 & 10.8419307 \\
6  & -0.6517205 & 0.65580459 & 0.19673172 & 0.47996977 & 0.42743878 & 0.85248358 & 0.75350741 & -4659.833  & 46.3329967 & 11.8534517 \\
7  & -0.8136505 & 0.71997129 & 0.21558026 & 0.44023152 & 0.41952419 & 0.79423596 & 0.74025892 & -4662.4815 & 1187.98211 & 166.006596 \\
8  & -0.7805857 & 0.70635344 & 0.22952102 & 0.44666893 & 0.4166063  & 0.80360771 & 0.7334486  & -4659.7758 & 85.6584355 & -3.4305336 \\
9  & -0.7543747 & 0.69912144 & 0.23468747 & 0.45029318 & 0.41542774 & 0.80935861 & 0.73077041 & -4659.1944 & 1.07446211 & -0.7993855 \\
10 & -0.7533784 & 0.69885561 & 0.23480071 & 0.45042677 & 0.41541378 & 0.80956671 & 0.73072468 & -4659.1943 & 1.03706137 & -0.8136858 \\
11 & -0.752414  & 0.69859801 & 0.2349108  & 0.45055614 & 0.41540021 & 0.8097682  & 0.73068024 & -4659.1942 & 1.00184387 & -0.8273569 \\
12 & -0.7514794 & 0.69834808 & 0.23501796 & 0.45068156 & 0.41538701 & 0.80996353 & 0.73063702 & -4659.1942 & 0.96860649 & -0.8404525 \\
13 & -0.7496917 & 0.69786927 & 0.23522417 & 0.4509216  & 0.41536164 & 0.81033731 & 0.73055392 & -4659.194  & 0.90739267 & -0.8650984 \\
14 & -0.7471903 & 0.69719768 & 0.23551545 & 0.45125776 & 0.41532584 & 0.81086063 & 0.73043671 & -4659.1938 & 0.82669598 & -0.8987475 \\
15 & -0.7463993 & 0.69698491 & 0.23560822 & 0.45136413 & 0.41531445 & 0.8110262  & 0.73039942 & -4659.1937 & 0.80232744 & -0.9091961 \\
16 & -0.7456279 & 0.69677723 & 0.23569899 & 0.45146791 & 0.41530331 & 0.8111877  & 0.73036294 & -4659.1937 & 0.77907591 & -0.9193009 \\
17 & -0.7434222 & 0.69618233 & 0.23596025 & 0.45176484 & 0.41527126 & 0.81164974 & 0.73025805 & -4659.1935 & 0.71530396 & -0.9477492 \\
18 & -0.7427203 & 0.69599272 & 0.2360439  & 0.45185938 & 0.41526101 & 0.81179683 & 0.73022449 & -4659.1934 & 0.69583386 & -0.9566667 \\
19 & -0.7376152 & 0.69460887 & 0.23665992 & 0.45254797 & 0.41518555 & 0.81286777 & 0.72997772 & -4659.193  & 0.56549166 & -1.0197019 \\
20 & -0.7353516 & 0.69399264 & 0.23693733 & 0.45285383 & 0.41515159 & 0.81334323 & 0.72986678 & -4659.1929 & 0.51375646 & -1.0466686 \\
21 & -0.7342715 & 0.69369811 & 0.23707047 & 0.45299987 & 0.41513529 & 0.81357023 & 0.72981356 & -4659.1928 & 0.47702329 & -1.0633442 \\
22 & -0.727748  & 0.69191195 & 0.23788938 & 0.45388257 & 0.41503509 & 0.81494168 & 0.72948683 & -4659.1924 & 0.30254549 & -1.1564316 \\
23 & -0.7269375 & 0.69168915 & 0.23799284 & 0.45399234 & 0.41502243 & 0.81511217 & 0.72944561 & -4659.1924 & 0.28659447 & -1.1671214 \\
24 & -0.7261524 & 0.69147312 & 0.23809343 & 0.45409871 & 0.41501011 & 0.81527736 & 0.72940554 & -4659.1923 & 0.27235218 & -1.177299  \\
25 & -0.7074205 & 0.68625723 & 0.24059125 & 0.45665159 & 0.41470173 & 0.81923664 & 0.72841073 & -4659.1916 & 0.15127988 & -1.3705213 \\
26 & -0.7028887 & 0.68497607 & 0.24122265 & 0.45727561 & 0.4146222  & 0.82020261 & 0.72815815 & -4659.1915 & 0.12096032 & -1.4033087 \\
27 & -0.6987925 & 0.68381109 & 0.24180316 & 0.45784257 & 0.4145479  & 0.82107955 & 0.72792474 & -4659.1914 & 0.07862954 & -1.4283679 \\
28 & -0.6934505 & 0.68228736 & 0.24256739 & 0.45858391 & 0.41444901 & 0.82222574 & 0.72761631 & -4659.1914 & 0.75324649 & -1.6442249 \\
29 & -0.6934765 & 0.68229002 & 0.24257963 & 0.45858236 & 0.41444404 & 0.82222285 & 0.72760762 & -4659.1914 & 9.93E-06   & -1.4542674 \\
30 & -0.6934774 & 0.68229032 & 0.24257975 & 0.45858219 & 0.41444396 & 0.8222226  & 0.7276075  & -4659.1914 & 7.4193E-08 & -1.4542632\\ \hline
\end{tabular}%
}
\vspace{-0.3cm}
\end{table}

\newpage
\noindent
GTS Parameter Estimation (in Table \ref{tab1}) was used to compare the empirical values and theoretical values of the first derivative of the GTS probability density function. The comparison provides the degree of accuracy for the advanced FRFT scheme and the fitness between empirical and theoretical values. As shown in Fig \ref{fig:5}, the empirical values $\left(\frac{\frac{df(x_{i}, V)}{dV_j}}{f(x_{i}, V)}\right)_{\substack{{1 \leq i \leq m} \\ {1 \leq j \leq 7}}}$ generated by S$\&$P 500 returns are displayed in circles (red for the parameter positive sign and black for the parameter negative sign); and the theoretical values $\left(\frac{\frac{df(x, V)}{dV_j}}{f(x, V)}\right)_{\substack{{1 \leq j \leq 7}}}$ are displayed in a straight curve. As shown in Fig \ref{fig:05} and Fig \ref{fig:07}, \textbf{$\beta_{+}$} and \textbf{$\alpha_{+}$} have a higher effect on the Probability density function (pdf) than \textbf{$\beta_{-}$} and \textbf{$\alpha_{-}$} respectively. However, in Fig \ref{fig:06}, \textbf{$\lambda_{-}$} and \textbf{$\lambda_{+}$} are symmetric and have the same effect on pdf.
 \begin{figure}[ht]
\vspace{-0.3cm}
     \centering
     \begin{subfigure}[b]{0.33\textwidth}
         \centering
         \includegraphics[width=\textwidth]{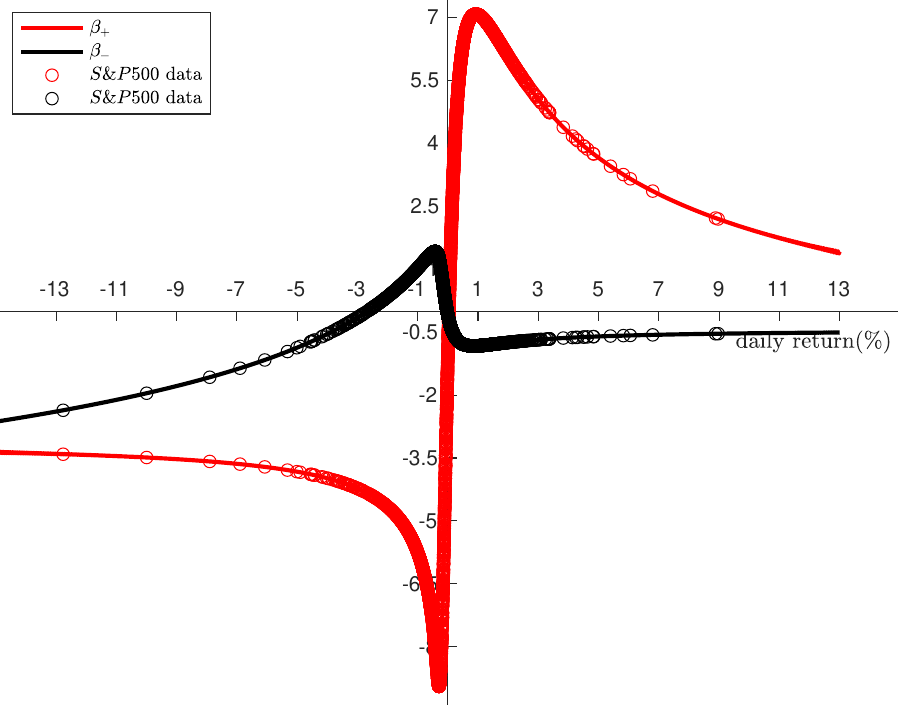}
         \vspace{-0.6cm}
         \caption{$V_j=\beta^{+}$, $V_j=\beta^{-}$}
         \label{fig:05}
     \end{subfigure}
     \begin{subfigure}[b]{0.32\textwidth}
         \centering
         \includegraphics[width=\textwidth]{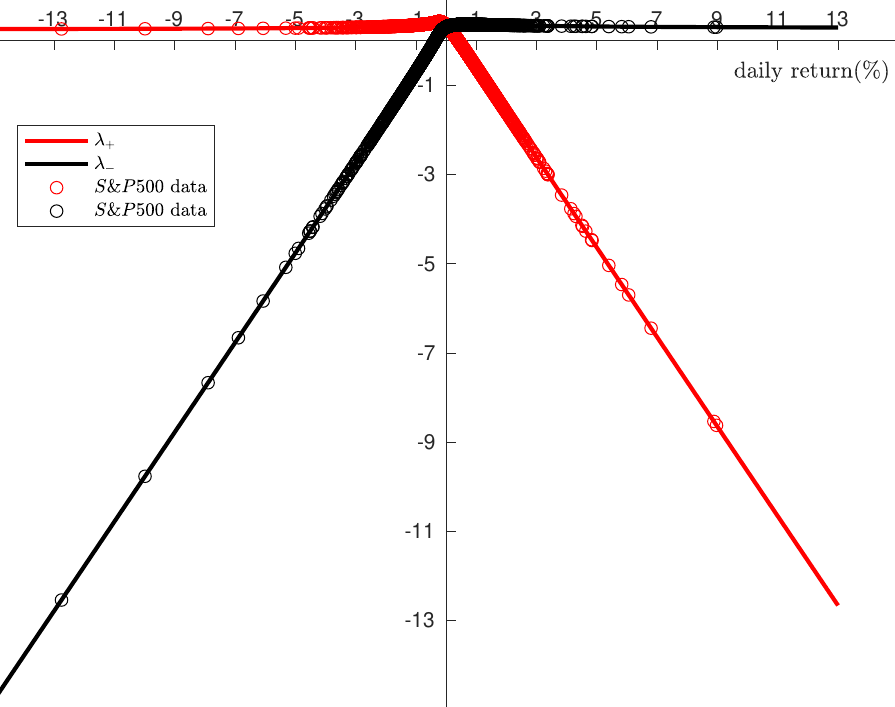}
         \vspace{-0.6cm}
         \caption{$V_j=\lambda^{+}$, $V_j=\lambda^{-}$}
         \label{fig:06}
     \end{subfigure}
       \begin{subfigure}[b]{0.33\textwidth}
         \centering
         \includegraphics[width=\textwidth]{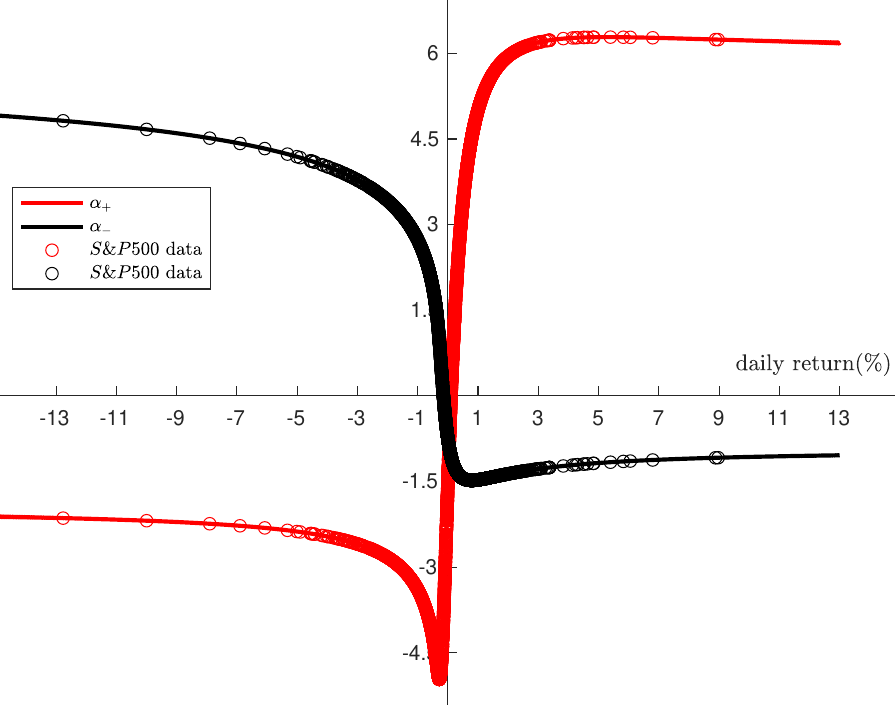}
         \vspace{-0.6cm}
         \caption{$V_j=\alpha^{+}$, $V_j=\alpha^{-}$}
         \label{fig:07}
     \end{subfigure}
        \caption{ $\frac{\frac{df(x_{i},V)}{dV_j}}{f(x_{i},V)}$}
        \label{fig:5}
        \vspace{-0.3cm}
\end{figure}
\newpage

\subsection{GTS Parameter Estimation: Bitcoin}
\noindent 
The results of the GTS Parameter Estimation from Bitcoin returns are reported in Table \ref{tab3}. As expected, the stability indexes (\textbf{$\beta_{-}$},\textbf{$\beta_{+}$}), the process intensities (\textbf{$\alpha_{-}$},\textbf{$\alpha_{+}$}), and the decay rate (\textbf{$\lambda_{-}$},\textbf{$\lambda_{+}$}) are all positive. The results show $0\le \beta_{+} \le 1$ and $0\le \beta_{-} \le 1$, the positive Bitcoin return ($X_{+}$) and the negative Bitcoin return ($X_{-}$) are infinite activity processes, which means each process has an infinite number of jumps in any given time interval. 
\begin{table}[ht]
\caption{ FRFT Maximum Likelihood GTS Parameter Estimation}
\label{tab3} 
\vspace{-0.3cm}
\centering
\begin{tabular}{@{}lccccccc@{}}
\toprule
\textbf{Model} & \textbf{$\mu$} & \textbf{$\beta_{+}$} & \textbf{$\beta_{-}$} & \textbf{$\alpha_{+}$} & \textbf{$\alpha_{-}$}  & \textbf{$\lambda_{+}$}  & \textbf{$\lambda_{-}$}  \\ \midrule
\textbf{GTS}  & -0.736924  & 0.461378 & 0.267178 & 0.810017 & 0.517347 & 0.215628 & 0.191937   \\ \bottomrule 
\end{tabular}
\vspace{-0.2cm}
\end{table} 

\noindent
 As shown in Table \ref{tab1}, We have \textbf{$\alpha_{-}$} $\leq$ \textbf{$\alpha_{+}$} and the higher arrival rate of jump on the right side ($X_{+}$) contributes to the increasing global trend of the bitcoin daily price in Fig \ref{fig02}. Regarding the skewness parameters, we have \textbf{$\lambda_{-}$} $\leq$ \textbf{$\lambda_{+}$} and the S$\&$P 500 return is a bit left-skewed distribution. The parameter analysis shows that the tail distribution is thicker on the negative side of the Bitcoin return distribution ($X_{-}$) than on the positive side.\\
 
\noindent
The Newton-Raphson Iteration Algorithm (\ref{eq:l38}) was implemented, and the results are reported in Table \ref{tab4}. As shown in Table \ref{tab4}, the log-likelihood (\textbf{$Log(ML)$}) value starts at $-10036.656$ and increases to a limit of $-9751.2193$; the \textbf{$||\frac{dLog(ML)}{dV}||$} value starts at $3188.1469$ and decreases to almost $0$; and the maximum value of the Eigenvalues (\textbf{$Max Eigen Value$}) start at $2854.2231$ and converge to $-0.000436$, which is negative. At the convergent solution, the Hessian matrix in (\ref{eq:l36}) is a negative semi-definite matrix; both conditions in (\ref{eq:l37}) are met, and we have a locally optimal solution.
\begin{table}[ht]
\vspace{-0.3cm}
\centering
\caption{GTS Parameter Estimations from Bitcoin returns}
\label{tab4}
\vspace{-0.3cm}
\resizebox{13cm}{!}{%
\begin{tabular}{ccccccccccc}
\hline
\textbf{$Iterations$} & \textbf{$\mu$} & \textbf{$\beta_{+}$} & \textbf{$\beta_{-}$} & \textbf{$\alpha_{+}$} & \textbf{$\alpha_{-}$} & \textbf{$\lambda_{+}$} & \textbf{$\lambda_{-}$} & \textbf{$Log(ML)$} & \textbf{$||\frac{dLog(ML)}{dV}||$} & \textbf{$Max Eigen Value$} \\ \hline
1 & -0.5973 & 0.3837 & 0.4206 & 1.872 & 0.694 & 0.5333 & 0.219 & -10036.656 & 3188.1469 & 2854.2231 \\
2 & -1.2052311 & 0.35850753 & 0.44451399 & 2.00512495 & 0.62313779 & 0.53840744 & 0.20635235 & -9912.1541 & 1971.95236 & 1518.15345 \\
3 & -1.624469 & 0.31755508 & 0.45798507 & 2.23748588 & 0.58475918 & 0.56388213 & 0.17998332 & -9865.5242 & 952.027546 & 716.347539 \\
4 & -1.9959515 & 0.2615866 & 0.47359748 & 2.57864607 & 0.5494987 & 0.60990478 & 0.16377681 & -9862.2551 & 440.489423 & 295.233361 \\
5 & -2.8340545 & -0.0033236 & 0.55268657 & 4.51611949 & 0.47359381 & 0.87264266 & 0.130233 & -9911.337 & 510.396179 & 60.6518499 \\
6 & -2.701598 & 0.07378625 & 0.54127007 & 3.76512451 & 0.47819392 & 0.7743675 & 0.13294829 & -9896.0802 & 611.613266 & 117.692562 \\
7 & -1.4130144 & 0.37715064 & 0.48510637 & 1.50024644 & 0.5523787 & 0.39998447 & 0.15474644 & -9804.2548 & 801.962681 & 307.039536 \\
8 & -1.049984 & 0.45100489 & 0.4741659 & 1.13318569 & 0.56070768 & 0.30817133 & 0.15667085 & -9775.871 & 728.753228 & 295.768229 \\
9 & -0.8211241 & 0.50542632 & 0.46591368 & 0.90824445 & 0.54487666 & 0.23555274 & 0.1533588 & -9756.7912 & 397.102674 & 133.652433 \\
10 & -0.7319208 & 0.53998085 & 0.44372538 & 0.81377008 & 0.54717214 & 0.19930192 & 0.15902826 & -9752.0888 & 89.8942291 & 12.4808028 \\
11 & -1.2317854 & 0.58475806 & 0.23780021 & 0.81260395 & 0.49719481 & 0.18597445 & 0.19343462 & -9751.3131 & 32.1533364 & -2.5188051 \\
12 & -0.7150133 & 0.4721273 & 0.29761245 & 0.80521805 & 0.52356611 & 0.21204916 & 0.18644449 & -9750.6869 & 5.52013773 & 0.11490414 \\
13 & -0.7271098 & 0.45992936 & 0.27054512 & 0.80932883 & 0.51829775 & 0.21585683 & 0.19139837 & -9751.2179 & 1.13726334 & 0.19172007 \\
14 & -0.7750442 & 0.46778988 & 0.25646086 & 0.81245021 & 0.51411912 & 0.21450942 & 0.19360685 & -9751.2252 & 1.00910053 & -0.9423667 \\
15 & -0.7090246 & 0.45678856 & 0.27535855 & 0.80819689 & 0.51977265 & 0.21641582 & 0.19065445 & -9751.2151 & 0.68576109 & 0.57268124 \\
16 & -0.7494266 & 0.46338031 & 0.26341052 & 0.81084098 & 0.51624184 & 0.21529106 & 0.19253194 & -9751.2211 & 0.65747338 & -0.2888343 \\
17 & -0.7429071 & 0.46233754 & 0.26537785 & 0.8104111 & 0.51681851 & 0.21546689 & 0.19222166 & -9751.2201 & 0.65058847 & -0.1361475 \\
18 & -0.7401206 & 0.46189156 & 0.2662183 & 0.81022749 & 0.51706495 & 0.21554213 & 0.19208913 & -9751.2197 & 0.64914609 & -0.0724136 \\
19 & -0.7369888 & 0.4613886 & 0.26715947 & 0.81002148 & 0.51734136 & 0.21562718 & 0.19194083 & -9751.2193 & 0.64887078 & -0.0018685 \\
20 & -0.7369633 & 0.4613845 & 0.26716714 & 0.8100198 & 0.51734361 & 0.21562787 & 0.19193962 & -9751.2193 & 0.64887314 & -0.0012986 \\
21 & -0.7369455 & 0.46138165 & 0.26717247 & 0.81001863 & 0.51734517 & 0.21562835 & 0.19193878 & -9751.2193 & 0.64887482 & -0.0009025 \\
22 & -0.7369332 & 0.46137967 & 0.26717617 & 0.81001782 & 0.51734626 & 0.21562869 & 0.1919382 & -9751.2193 & 0.64887601 & -0.0006273 \\
23 & -0.7369246 & 0.46137829 & 0.26717875 & 0.81001726 & 0.51734702 & 0.21562892 & 0.19193779 & -9751.2193 & 0.64887685 & -0.000436\\ \hline
\end{tabular}
}
\vspace{-0.2cm}
\end{table}

\noindent
GTS parameter estimation (in Table \ref{tab3}) was used to compare the empirical values and theoretical values of the first derivative of the GTS probability density function. Similar to Fig \ref{fig:5}, in Fig \ref{fig:9}, the empirical values $\left(\frac{\frac{df(x_{i}, V)}{dV_j}}{f(x_{i}, V)}\right)_{\substack{{1 \leq i \leq m} \\ {1 \leq j \leq 7}}}$ generated by the Bitcoin returns are displayed in circles (red for the parameter positive sign and black for the parameter negative sign; and the theoretical values $\left(\frac{\frac{df(x, V)}{dV_j}}{f(x, V)}\right)_{\substack{{1 \leq j \leq 7}}}$ are displayed in a straight curve. As shown in Fig \ref{fig:08}, \textbf{$\beta_{-}$} has a higher effect on the Probability density function (pdf) than \textbf{$\beta_{+}$}, but \textbf{$\alpha_{+}$} has a higher effect than \textbf{$\alpha_{-}$} in Fig \ref{fig:10}. However, in Fig \ref{fig:09}, \textbf{$\lambda_{-}$} and \textbf{$\lambda_{+}$} are symmetric and have the same effect on pdf.

\begin{figure}[ht]
     \centering
     \begin{subfigure}[b]{0.34\textwidth}
         \centering
         \includegraphics[width=\textwidth]{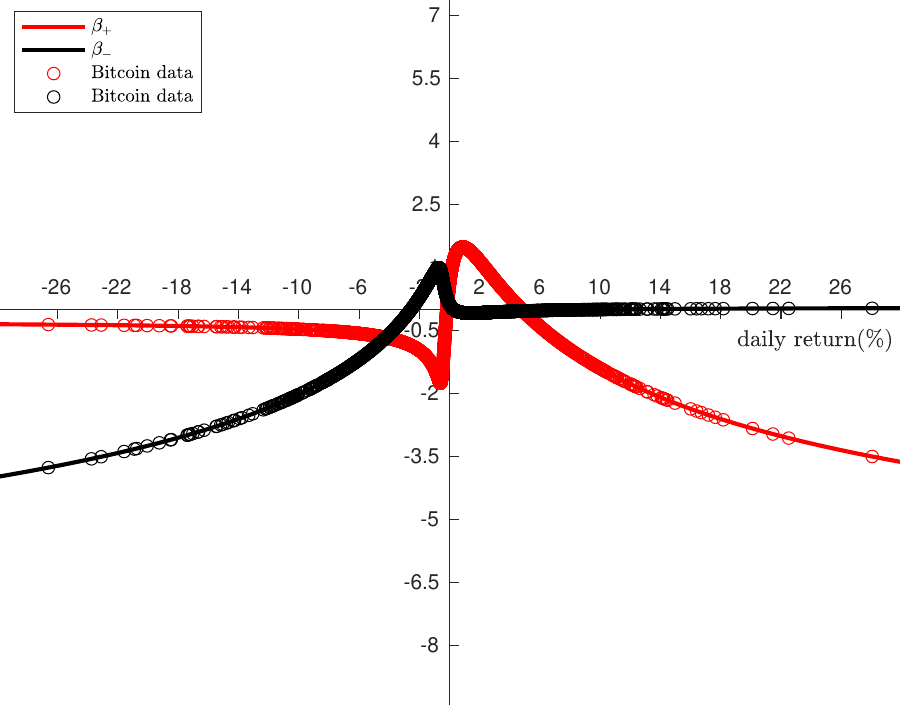}
         \caption{ $V_j=\beta^{+}$, $V_j=\beta^{-}$}
         \label{fig:08}
     \end{subfigure}
     \begin{subfigure}[b]{0.3\textwidth}
         \includegraphics[width=\textwidth]{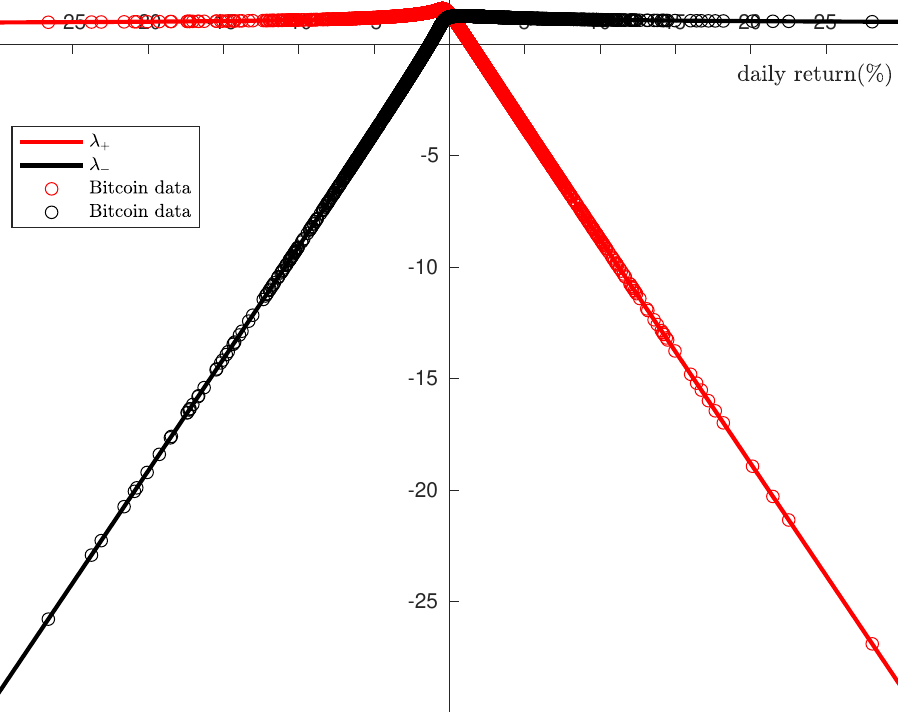}
         \caption{$V_j=\lambda^{+}$, $V_j=\lambda^{-}$}
         \label{fig:09}
     \end{subfigure}
       \begin{subfigure}[b]{0.34\textwidth}
         \includegraphics[width=\textwidth]{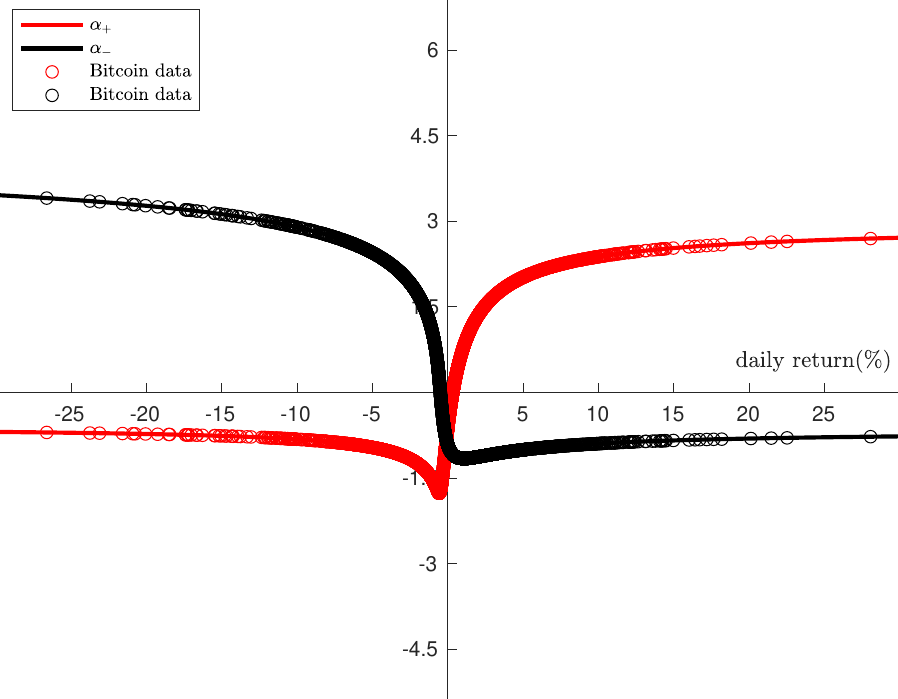}
         \caption{$V_j=\alpha^{+}$, $V_j=\alpha^{-}$}
         \label{fig:10}
     \end{subfigure}
        \caption{ $\frac{\frac{df(x_{i},V)}{dV_j}}{f(x_{i},V)}$}
        \label{fig:9}
        \vspace{-0.3cm}
\end{figure}
\noindent
The difference in amplitude effect is more visible when Fig \ref{fig:5} (S$\&$P 500 index) and Fig \ref{fig:10} (Bitcoin) are put together with each GST parameter side by side. As shown in the Appendix \ref{eq:an1}, the stability indexes (\textbf{$\beta_{-}$},\textbf{$\beta_{+}$}) and the process intensities (\textbf{$\alpha_{-}$},\textbf{$\alpha_{+}$}) degenerated from S$\&$P 500 index fitting have higher effects on the GST probability density function (pdf) whereas the decay rate (\textbf{$\lambda_{-}$},\textbf{$\lambda_{+}$}) degenerated from Bitcoin fitting has higher effects on the GST probability density function (pdf).


\section{Analysis and Findings: Density function and Key Statistics }

\noindent
The GTS parameter estimates from S$\&$P 500 return and Bitcoin return data are used to compute the GTS probability density functions, and their main characteristics are analyzed. As illustrated in Fig\ref{fig:14}, Both graphs show that tail events are much more prevalent than we would expect with a Normal distribution. The heavy-tailed distribution captures the huge price swings of the Bitcoin and S$\&$P 500 index. According to Table \ref{tab5}, the theoretical Kurtosis statistics are $8.92319$ for S$\&$P 500 returns and $9.74633$ for Bitcoin returns, which are almost three times the kurtosis of the Normal distribution. The peakedness of the density function is another characteristic, as shown in Fig\ref{fig:14}. In contrast to the Normal distribution, there is a higher concentration of data values around the mean. The degree of concentration is much higher for S$\&$P 500 return in Fig\ref{fig:12} than Bitcoin return distribution in Fig\ref{fig:11}. Many studies show that Kurtosis is not a measure of peakedness but rather a measure of tailedness\cite{horn1983measure, kochanski2022does, westfall2014kurtosis}. Both GTS distributions are called in the literature leptokurtic distributions.\\
\noindent
For Bitcoin return distributions in Fig\ref{fig:11}, the Normal probability density (red curve) with mean ($\kappa_{1}=0.15\%$) and standard deviation ($\sigma=3.99\%$) is plotted alongside the GTS probability density (black curve). According to the Normal distribution, about $35.15\%$ of Bitcoin returns ( in the purple area) are within $-1.57\%$ and $2.07\%$, $19.63\%$ lower than the actual percentage of Bitcoin returns. According to the GTS distribution, about $20.72\%$ of Bitcoin returns (in the blue area) are within $-9.53\%$ and $-1.57\%$; and on the right side, $20.99\%$ of Bitcoin returns (in the yellow area) are within $2.07\%$ and $10.05\%$. However, as shown in the red and blue areas in Fig\ref{fig:11}, both percentages are overestimated by the normal distribution.
 \begin{figure}[ht]
 \vspace{-0.3cm}
     \centering
     \begin{subfigure}[b]{0.48\textwidth}
         \centering
         \includegraphics[width=\textwidth]{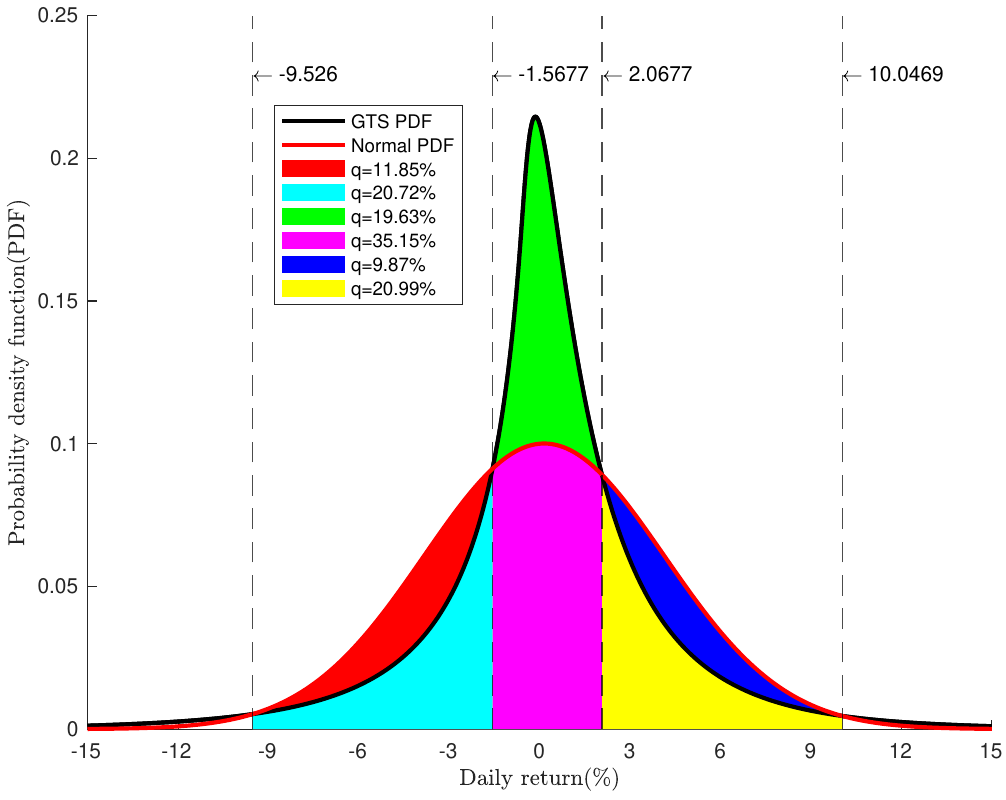}
         \vspace{-0.6cm}
         \caption{Bitcoin return distribution}
         \label{fig:11}
     \end{subfigure}
     \begin{subfigure}[b]{0.48\textwidth}
         \centering
         \includegraphics[width=\textwidth]{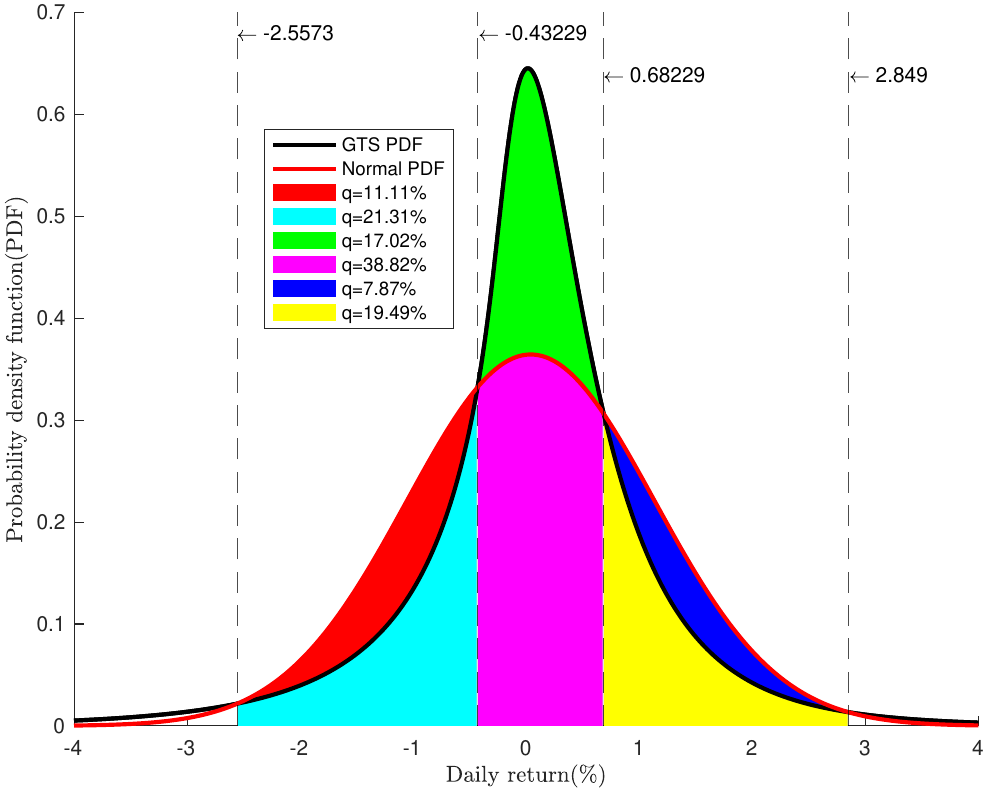}
         \vspace{-0.6cm}
         \caption{S$\&$P 500 return distribution}
         \label{fig:12}
     \end{subfigure}
     \vspace{-0.3cm}
        \caption{Probability density functions}
        \label{fig:14}
       \vspace{-0.3cm}
\end{figure}

\noindent
For S$\&$P 500 return distributions in Fig\ref{fig:12}, the Normal probability density (red curve) with a mean ($\kappa_{1}=0.04\%$) and a standard deviation ($\sigma=1.09\%$) is plotted along with the GTS probability density (black curve). Similar to the previous analysis, according to the Normal distribution, about $38.82\%$ of S$\&$P 500 index returns are concentrated within $-0.43\%$ and $0.68\%$; the percentage is $17.02\%$ lower than the actual percentage of S$\&$P 500 returns. According to the GTS distribution, about $21.31\%$ of S$\&$P 500 (in cyan area) are within $-2.56\%$ and $-0.43\%$; and on the right side, $19.49\%$ of S$\&$P 500 returns (in yellow area) are within $0.68\%$ and $2.85\%$. However, as shown in the red and blue areas in Fig\ref{fig:11}, both percentages are overestimated by the normal distribution.\\
\noindent
The GTS probability density functions from S$\&$P 500 returns and Bitcoin returns are compared in Fig\ref{fig:13}. The heavy-tailedness is the main characteristic of the GTS probability density function, whereas the GTS probability density function from the S$\&$P 500 index is characterized by peakedness.\\
 \begin{figure}[ht]
 \vspace{-0.6cm}
     \centering
         \includegraphics[width=0.55\textwidth]{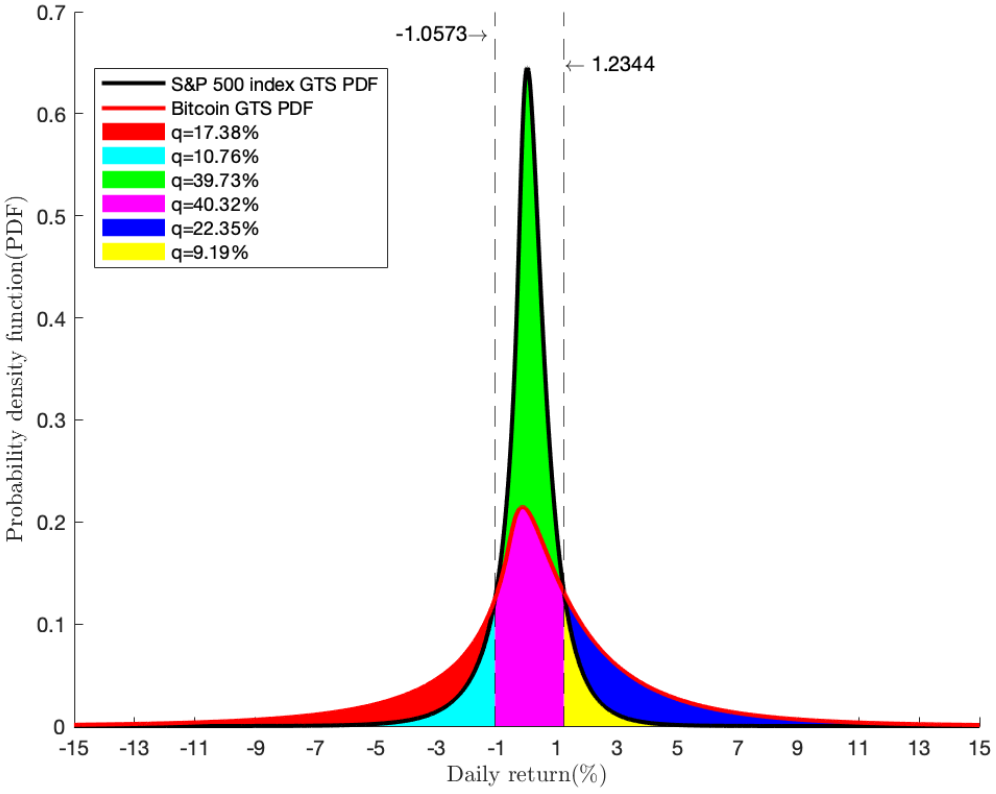}
         \vspace{-0.4cm}
         \caption{Bitcoin Versus $S\&P$ 500 returns}
         \label{fig:13}
         \vspace{-0.3cm}
\end{figure}

\newpage
\noindent
For Bitcoin return distribution in Fig\ref{fig:13}, about $40.32\%$ of Bitcoin returns ( in the purple area) are within $-1.06\%$ and $1.23\%$; $28.14\%$ of Bitcoin returns are less than $-1.06\%$; and $31.54\%$ of Bitcoin returns are more than $1.23\%$. While S$\&$P 500 return distribution records about $80.05\%$ of S$\&$P 500 return within $-1.06\%$ and $1.23\%$; $10.76\%$ of S$\&$P 500 returns ( in cyan area) are less than $-1.06\%$; and $9.19\%$ of S$\&$P 500 returns ( in yellow area) are more than $1.23\%$.\\

\noindent
Table \ref{tab5} provides the empirical statistics from the sample data and the key statistics of GTS distribution. For S$\&$P 500 index, the sample size was $3386$ rate of returns with the largest and smallest values $8.96883\%$ and $-12.7652\%$ respectively. For Bitcoin, the sample size was $3705$ rate of returns with the largest and smallest values $28.0520\%$ and $-26.6197\%$, respectively. 
\begin{table}[ht]
\vspace{-0.5cm}
 \caption{Summary Statistics }
\label{tab5}
\centering
\begin{tabular}{@{}l | c c | c c@{}}
\toprule
                                              & \multicolumn{2}{c|}{\textbf{S\&P 500}} & \multicolumn{2}{c}{\textbf{Bitcoin}} \\ \toprule
\multirow{1}{*}{\textbf{Label}} & \multirow{1}{*}{\textbf{Empirical}} & \multirow{1}{*}{\textbf{Theoretical}}  & \multirow{1}{*}{\textbf{Empirical}} & \multirow{1}{*}{\textbf{Theoretical}} \\ \toprule
\multirow{1}{*}{Sample size(m)} & \multirow{1}{*}{3386} & \multirow{1}{*}{} & \multirow{1}{*}{3705} & \multirow{1}{*}{} \\ 
\multirow{1}{*}{Mean ($\kappa_{1}$)} & \multirow{1}{*}{0.0401\%} & \multirow{1}{*}{0.0401\%} & \multirow{1}{*}{0.1488\%} & \multirow{1}{*}{0.1489\%} \\ 
\multirow{1}{*}{Standard deviation ($\sigma$)} & \multirow{1}{*}{1.1192\%} & \multirow{1}{*}{1.0947\%} &\multirow{1}{*}{3.9784\%} & \multirow{1}{*}{3.9866\%} \\
\multirow{1}{*}{Coefficient of Variation (CV) } & \multirow{1}{*}{27.88917} & \multirow{1}{*}{27.27762} &\multirow{1}{*}{26.73285} & \multirow{1}{*}{26.77318} \\ 
\multirow{1}{*}{Skewness ($\frac{\kappa_{3}}{\kappa_{2}^{3/2}}$)} & \multirow{1}{*}{-0.72331} & \multirow{1}{*}{-0.57964} & \multirow{1}{*}{-0.32744} & \multirow{1}{*}{-0.31987} \\ 
\multirow{1}{*}{Kurtosis ($3+\frac{\kappa_{4}}{\kappa_{2}^{2}}$)} & \multirow{1}{*}{16.04641} & \multirow{1}{*}{8.92319} & \multirow{1}{*}{8.88504} & \multirow{1}{*}{9.74633} \\ 
\multirow{1}{*}{Max value} & \multirow{1}{*}{8.9683\%} & \multirow{1}{*}{} & \multirow{1}{*}{28.0520\%} & \multirow{1}{*}{} \\ 
\multirow{1}{*}{Min Value} & \multirow{1}{*}{-12.7652\%} & \multirow{1}{*}{} & \multirow{1}{*}{-26.6197\%} & \multirow{1}{*}{} \\
\bottomrule
\end{tabular}%
\vspace{-0.3cm}
\end{table}

\noindent
As shown in Table \ref{tab5}, the empirical and theoretical Mean ($\kappa_{1}$), Standard deviation ($\sigma$), and Coefficient of Variation (CV) are consistent for each asset. However, the empirical estimations overestimate the Kurtosis and skewness statistics, except for the S$\&$P 500 index; the plausible explanation is the impact of the outliers ($-12.7652\%$, $8.96883\%$).

\section{Analysis and Findings: Value-at-Risk and Average Value-at-Risk}
\noindent
The value-at-risk (VaR) and average value-at-risk (AVaR) are widely used financial risk measures. The value at risk (VaR) can be defined as the minimum level of loss or profit at a given confidence level. The estimation of the value-at-risk (VaR) and average value-at-risk (AVaR) are compared to the empirical VaR and AVaR. \\
\noindent
We consider the following sorted sample $x_{1}, x_{2}, \dots, x_{n}$ corresponding to instant $t_{1}, t_{2}, \dots, t_{n}$, the empirical VaR and AVaR at tail probability ($\alpha$) are obtained by applying the following estimators. \cite{kim2009computing, rachev2008advanced}. 
 \begin{equation}
\widehat{VaR}_{\alpha} =x_{(\lceil n\alpha \rceil)} \quad \quad \widehat{AVaR}_{\alpha} =\frac{1}{1-\alpha} \left[ \frac{1}{n}\sum_{\lceil n\alpha \rceil +1}^{n} x_{j} + ( \frac{\lceil n\alpha \rceil}{n} - \alpha ) x_{(\lceil n\alpha \rceil) } \right] 
\label{eq:l80}
\end{equation}
Where the notation $\lceil x \rceil$ stands for the smallest integer larger than $x$.\\
\subsection{Analysis and Findings: Value-at-Risk ($VaR_{\alpha}(X)$)}
\noindent
Based on the characteristic exponent of the GTS distribution, we develop a methodology to compute the theoretical value at risk (VaR). We assume the return distribution function is continuous; that is, there are no point masses. Formally, the VaR at confidence level $(1- \alpha)$ or tail probability ($\alpha$) is the $\alpha^{th}$ quantile of the return distribution and has the following mathematical expression.
 \begin{align}
VaR_{\alpha}(X)= inf\{ x | p(X \leq x) \geq \alpha \}= F^{-1}(x) \label{eq:l81}
\end{align} 

Where $0\leq \alpha \leq 1$ and $F^{-1}$ is the inverse of the cumulative distribution function. \\
\noindent
The GTS(\textbf{$\mu$}, \textbf{$\beta_{+}$}, \textbf{$\beta_{-}$}, \textbf{$\alpha_{+}$},\textbf{$\alpha_{-}$}, \textbf{$\lambda_{+}$}, \textbf{$\lambda_{-}$}) density and cumulative functions do not have closed form, which makes it difficult estimate of the $\alpha^{th}$ quantile analytically. Therefore, we rely on the computational method based on the Advanced FRFT developed previously. We consider a sequence of GTS cumulative function values $\left(F_{j}\right)_{1 \leq j \leq m}$  on a sequence of input value $\left(x_{j}\right)_{1 \leq j \leq m}$. $\left(F_{j}\right)_{1 \leq j \leq m}$ is generated by the Advanced FRFT scheme. We also consider $x_{\alpha}$ the $\alpha^{th}$ quantile defined by $F(x_{\alpha})=\alpha$ with $x_{i}<x_{\alpha}<x_{i+1}$ and $F_{i}<F(x_{\alpha})<F_{i+1}$.\\
By applying the Taylor series approximation, we have :
  \begin{equation} 
 \begin{aligned}
F(x_{\alpha})&=F_{i} + \frac{u}{1!}F^{(1)}_{i} + \frac{u^{2}}{2!}F^{(2)}_{i} + \frac{u^{3}}{3!}F^{(3)}_{i} + \frac{u^{4}}{4!}F^{(4)}_{i} + O(u^{4})\\
 u&=x_{\alpha} - x_{i} \quad \quad \lim_{u\to 0} O(u^{4})=0
\label{eq:l82}
\end{aligned}
\end{equation}

\noindent
By applying the central difference representations of $O(\Delta^{2})$ with $\Delta =x_{i+1} - x_{i}$, we have the values of first, second, third, and fourth-order derivatives.
  \begin{equation} 
 \begin{aligned}
a_{1}&=\Delta  F^{(1)}_{i}=\frac{-F_{i-1} + F_{i+1}}{2} \quad \quad a_{3}=\Delta^{3} F^{(3)}_{i}=\frac{-F_{i-2} + 2F_{i-1} - 2F_{i+1} + F_{i+2}}{2}\\
a_{2}&=\Delta^{2} F^{(2)}_{i}=F_{i-1} - 2F_{i} + F_{i+1} \quad \quad  a_{4}=\Delta^{4} F^{(4)}_{i}=F_{i-2} - 4F_{i-1} + 6F_{i} - 4F_{i+1} + F_{i+2}
\label{eq:l83}
\end{aligned}
\end{equation}
\noindent
By removing the function $O(x_{\alpha} - x_{i})$ in (\ref{eq:l82}), we have the following polynomial equation:
  \begin{equation} 
 \begin{aligned}
 -(\alpha -F_{i}) + a_{1}\frac{x_{\alpha} - x_{i}}{x_{i+1} - x_{i}}  + \frac{a_{2}}{2!}(\frac{x_{\alpha} - x_{i}}{x_{i+1} - x_{i}})^{2}+ \frac{a_{3} }{3!}(\frac{x_{\alpha} - x_{i}}{x_{i+1} - x_{i}})^{3}+ \frac{a_{4}}{4!}(\frac{x_{\alpha} - x_{i}}{x_{i+1} - x_{i}})^{4} = 0 
\label{eq:l84}
\end{aligned}
\end{equation}

\noindent
Let us take $b_{0}=-(\alpha -F_{i})$, $b_{1}=a_{1}$, $b_{2}=\frac{a_{2}}{2!}$, $b_{3}=\frac{a_{3}}{3!}$, $b_{4}=\frac{a_{4}}{4!}$ and $y=\frac{x_{\alpha} - x_{i}}{x_{i+1} - x_{i}}$. The polynomial equation becomes:
 \begin{equation} 
b_{0} + b_{1} y  + b_{2} y^{2} + b_{3} y^{3} + b_{4} y^{4} = 0 \\
\label{eq:l85}
\end{equation}

\noindent 
The Intermediate Value Theorem guarantees the existence of the solution $(y)$ over the interval (0,1). For a root (y) of the equation (\ref{eq:l85}) with $0<y<1$, we have the estimation of the $\alpha^{th}$ quantile as follows.
 \begin{equation} 
 x_{\alpha}=x_{i} + y(x_{i+1} - x_{i})
\label{eq:l86}\\
\end{equation}

\noindent 
It was shown in Fig\ref{fig:13} that the GTS distribution generated from Bitcoin and S$\&$P500 returns are not symmetric. Therefore, for a given tail probability ($\alpha$), the $VaR_{\alpha}(X)$ is not expected to yield the same value as the $VaR_{1-\alpha}(X)$ for a confident level ($1-\alpha$). For tail probability ($\alpha$) from $0.5\%$, $1\%$,\dots, $10\%$, the theoretical and empirical Value-at-Risk ($VaR_{\alpha}(X)$) was computed and summarized in table \ref{tab8} (appendix \ref{eq:an2}). For the correspondent confidence level ($1-\alpha$) from $90\%$,\dots ,$99\%$, $99.5\%$ the estimations of the the theoretical and empirical Value-at-Risk ($VaR_{1-\alpha}(X)$) are summarised in table \ref{tab6}. Both tables show that the empirical and theoretical Value-at-Risk is consistent. As expected, the bitcoin theoretical and empirical Value-at-Risk ($VaR_{1-\alpha}(X)$) are higher than that of the S$\&$P 500 index, which is consistent with the heavy-tailedness of the bitcoin return. We have the same pattern in table \ref{tab8}. 
\begin{table}[ht]
\vspace{-0.4cm}
 \caption{ Value-at-Risk Statistics }
\label{tab6}
\centering
\begin{tabular}{@{}c | c c | c c@{}}
\toprule
  \multicolumn{1}{c|}{\textbf{$VaR_{1-\alpha}(X)$}} & \multicolumn{2}{c|}{\textbf{S\&P 500 index (\%)}} & \multicolumn{2}{c}{\textbf{Bitcoin (\%)}} \\ \toprule
\multirow{1}{*}{\textbf{Confidence Level ($1-\alpha$)}} & \multirow{1}{*}{\textbf{Empirical}} & \multirow{1}{*}{\textbf{Theoretical}}  & \multirow{1}{*}{\textbf{Empirical}} & \multirow{1}{*}{\textbf{Theoretical}} \\ \toprule
\multirow{1}{*}{$90\%$} & \multirow{1}{*}{1.1819} & \multirow{1}{*}{1.1760} &\multirow{1}{*}{4.3173} & \multirow{1}{*}{4.2392} \\ 
\multirow{1}{*}{$91\%$} & \multirow{1}{*}{1.2638} & \multirow{1}{*}{1.2499} & \multirow{1}{*}{4.5631} & \multirow{1}{*}{4.5379} \\
\multirow{1}{*}{$92\%$} & \multirow{1}{*}{1.3311} & \multirow{1}{*}{1.3333} & \multirow{1}{*}{4.8620} & \multirow{1}{*}{4.8761} \\
\multirow{1}{*}{$93\%$} & \multirow{1}{*}{1.4050} & \multirow{1}{*}{1.4288} & \multirow{1}{*}{5.1729} & \multirow{1}{*}{5.2647} \\ 
\multirow{1}{*}{$94\%$} & \multirow{1}{*}{1.4770} & \multirow{1}{*}{1.5402} & \multirow{1}{*}{5.5775} & \multirow{1}{*}{5.7202} \\ 
\multirow{1}{*}{$95\%$} & \multirow{1}{*}{1.5939} & \multirow{1}{*}{1.6738} &\multirow{1}{*}{6.3628} & \multirow{1}{*}{6.2679} \\ 
\multirow{1}{*}{$96\%$} & \multirow{1}{*}{1.7231} & \multirow{1}{*}{1.8399} & \multirow{1}{*}{7.1021} & \multirow{1}{*}{6.9508} \\ 
\multirow{1}{*}{$97\%$} & \multirow{1}{*}{1.9259} & \multirow{1}{*}{2.0583} &\multirow{1}{*}{7.9802} & \multirow{1}{*}{7.8507} \\ 
\multirow{1}{*}{$98\%$} & \multirow{1}{*}{2.2504} & \multirow{1}{*}{2.3738} & \multirow{1}{*}{9.5034} & \multirow{1}{*}{9.1534} \\
\multirow{1}{*}{$99\%$} & \multirow{1}{*}{2.8115} & \multirow{1}{*}{2.9334} & \multirow{1}{*}{11.1161} & \multirow{1}{*}{11.4653} \\
\multirow{1}{*}{$99.5\%$} & \multirow{1}{*}{3.3713} & \multirow{1}{*}{3.5166} & \multirow{1}{*}{13.5660} & \multirow{1}{*}{13.8771} \\ 
\bottomrule
\end{tabular}%
\vspace{-0.3cm}
\end{table}

\noindent
As shown in table \ref{tab6}, The 95\% $VaR_{1-\alpha}(X)$ of both S$\&$P 500 index and Bitcoin are equal to $3.52\%$ and $13.88\%$ respectively. That is, Bitcoin gains more than 13.88\% of its present value with a probability of $5\%$, whereas the S$\&$P 500 index gains only more than $3.52\%$ of its present value with a probability of $5\%$. As illustrated in Fig \ref{fig:15} and Fig \ref{fig:16}, the value-at-risk (\textbf{$VaR_{\alpha}(X)$}) of bitcoin and S$\&$P 500 index increase at an increasing rate in each interval. There is also a discrepancy between the value-at-risk (\textbf{$VaR_{\alpha}(X)$}) of bitcoin and S$\&$P 500 index. A disadvantage of value-at-risk (VaR)\cite{rachev2008advanced} is that it does not provide any information about the magnitude of the losses or profits larger than the value-at-risk level.

\subsection{Analysis and Findings: Average Value-at-Risk ($AVaR_{\alpha}(X)$)}
\noindent
The average value-at-risk ($AVaR_{\alpha}$) at tail probability $\alpha$ is defined as the average of the value-at-risks (VaRs) that are larger than the value-at-risk (VaR) at tail probability $\alpha$. The mathematical expression can be written as follows\\
 \begin{equation}
   AVaR_{\alpha}(X)=\frac{1}{\alpha}\int_{0}^{\alpha}VaR_{y}(X)dy
   \label {eq:l87}
\end{equation}

\noindent
The Generalized Tempered Stable (GTS) distribution is a continuous variable ($X$), and the Value-at-Risk (VaR) has a simplified expression $VaR_{\alpha}(X)=F_{X}^{-1}(\alpha)$ in (\ref{eq:l81}) and the integral in (\ref{eq:l87}) becomes
 \begin{align}
  \int_{0}^{\alpha}VaR_{y}(X) dy= \int_{0}^{\alpha}F_{X}^{-1}(y)dy=\int_{0}^{F_{X}^{-1}(\alpha)}\theta dF_{X}(\theta)=E\left[X \mathbbm{1}_{\{X\leq VaR_{\alpha}(X)\}}\right] \label {eq:l88}
\end{align}
\noindent
The average value-at-risk (\ref{eq:l87}) becomes
 \begin{equation}
 \begin{aligned}
   AVaR_{\alpha}(X)&=\frac{1}{\alpha}E\left[X \mathbbm{1}_{\{X\leq VaR_{\alpha}(X)\}}\right]=\frac{1}{\alpha}E\left[VaR_{\alpha}(X) \mathbbm{1}_{\{X\leq VaR_{\alpha}(X)\}}  + \left(X - VaR_{\alpha}(X)\right)^{-} \right]\\ 
   &=VaR_{\alpha}(X) + \frac{1}{\alpha}E\left[ \left(X - VaR_{\alpha}(X)\right)^{-} \right]
   \label {eq:l89}
\end{aligned}
\end{equation}

\noindent
For the confident level $(1-\alpha)$, the average value-at-risk ($AVaR_{1-\alpha}$) becomes
 \begin{equation}
 \begin{aligned}
   AVaR_{1-\alpha}(X)&=\frac{1}{1-\alpha}\int_{0}^{1-\alpha}VaR_{1-y}(X)dy=\frac{1}{1-\alpha}\int_{\alpha}^{1}VaR_{y}(X)dy\\ 
   &=VaR_{1-\alpha}(X) + \frac{1}{1-\alpha}E\left[ \left(X - VaR_{1-\alpha}(X)\right)^{+} \right]
   \label {eq:l90}
\end{aligned}
\end{equation}
\noindent
We have the following expression on the loss ( $\alpha \leq \frac{1}{2}$) and the profit ( $\alpha \geq \frac{1}{2}$) of the return distribution
 \begin{equation}
 \begin{aligned}
 AVaR_{\alpha}(X) &=VaR_{\alpha}(X) + \frac{1}{\alpha}E\left[ \left(X - VaR_{\alpha}(X)\right)^{-} \right] \hspace{5mm}  &\hbox{$\alpha < \frac{1}{2}$}\\
   AVaR_{1-\alpha}(X)&=VaR_{1-\alpha}(X) + \frac{1}{1-\alpha}E\left[ \left(X - VaR_{1-\alpha}(X)\right)^{+} \right] \hspace{5mm}  &\hbox{$\alpha > \frac{1}{2}$}
   \label {eq:l91}
\end{aligned}
\end{equation}
 \begin{theorem} \label{lem1}\ \\
 \noindent
 Let $ k \in \mathbb{R} $. $X$ is a GTS(\textbf{$\mu$}, \textbf{$\beta_{+}$}, \textbf{$\beta_{-}$}, \textbf{$\alpha_{+}$},\textbf{$\alpha_{-}$}, \textbf{$\lambda_{+}$}, \textbf{$\lambda_{-}$}) random variable with characteristic exponent function $\Psi(\xi)$ .\\
\noindent
There exists $q > 0$ such that 
 \begin{equation}
 \begin{aligned}
E[\left(X - k\right)^{+}]&=\frac{1}{2\pi}\int_{-\infty + iq}^{+\infty + iq}\frac{-1}{z^2}e^{izk + \Psi(-z)} dz  \\
E[\left(X - k\right)^{-}]&=\frac{1}{2\pi}\int_{-\infty - iq}^{+\infty - iq}\frac{1}{z^2}e^{izk + \Psi(-z)} dz  \label{eq:l92}
\end{aligned}
\end{equation}

\end{theorem}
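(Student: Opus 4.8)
The plan is to treat $E[(X-k)^{+}]$ and $E[(X-k)^{-}]$ as expectations of option-type payoffs and evaluate them by a generalized, complex-shifted Fourier transform. The starting point is the characteristic-function identity recorded above, $F[f](\xi)=e^{\Psi(-\xi)}$, which for real $z$ gives $E[e^{-izX}]=e^{\Psi(-z)}$. Because the GTS L\'evy measure (\ref{eq:l23}) is exponentially tempered, the density tails decay like $e^{-\lambda_{+}x}$ as $x\to+\infty$ and like $e^{-\lambda_{-}|x|}$ as $x\to-\infty$; hence $E[e^{\eta X}]<\infty$ for every $\eta\in(-\lambda_{-},\lambda_{+})$, and the identity $E[e^{-izX}]=e^{\Psi(-z)}$ extends analytically to the strip $\{z:\IM z\in(-\lambda_{-},\lambda_{+})\}$. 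This analytic continuation is exactly what licenses shifting the integration contour to $\IM z=q$ for the $(\,\cdot\,)^{+}$ case and to $\IM z=-q$ for the $(\,\cdot\,)^{-}$ case.

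Next I would compute the generalized Fourier transform of the payoff. For $\IM z=q>0$,
\begin{equation}
\int_{-\infty}^{+\infty}e^{izx}(x-k)^{+}\,dx=e^{izk}\int_{0}^{\infty}u\,e^{izu}\,du=-\frac{e^{izk}}{z^{2}},
\end{equation}
the $u$-integral converging since $e^{izu}=e^{i(\RE z)u}e^{-qu}$ decays. Inverting along $\IM z=q$ writes the payoff as $(x-k)^{+}=\frac{1}{2\pi}\int_{-\infty+iq}^{+\infty+iq}e^{-izx}\bigl(-e^{izk}/z^{2}\bigr)\,dz$; substituting into $E[(X-k)^{+}]=\int(x-k)^{+}f(x)\,dx$ and interchanging the two integrations yields
\begin{equation}
E[(X-k)^{+}]=\frac{1}{2\pi}\int_{-\infty+iq}^{+\infty+iq}\frac{-1}{z^{2}}e^{izk}\Bigl(\int_{-\infty}^{+\infty}f(x)e^{-izx}\,dx\Bigr)dz=\frac{1}{2\pi}\int_{-\infty+iq}^{+\infty+iq}\frac{-1}{z^{2}}e^{izk+\Psi(-z)}\,dz,
\end{equation}
the first claimed formula, valid for any $0<q<\lambda_{+}$. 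For the second formula I would use the convention $(X-k)^{-}=\min(X-k,0)=-(k-X)^{+}$ implicit in (\ref{eq:l89}); the analogous computation for $(k-x)^{+}$ gives the transform $-e^{izk}/z^{2}$ on the line $\IM z=-q<0$, and the extra minus sign coming from $(X-k)^{-}=-(k-X)^{+}$ turns the kernel $-1/z^{2}$ into $+1/z^{2}$, producing the stated expression for any $0<q<\lambda_{-}$.

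The step I expect to be the main obstacle is justifying the interchange of the $x$- and $z$-integrations together with the legitimacy of the contour shift. The decisive ingredient is again the exponential tempering: on the shifted line the kernel contributes the factor $e^{qx}$ (resp.\ $e^{-qx}$) against the payoff, and pairing this with the density produces the integrand $|z|^{-2}e^{\Psi(-z)}$, whose $|z|^{-2}$ decay in $z$ and whose finiteness $E[e^{qX}]<\infty$ (resp.\ $E[e^{-qX}]<\infty$) hold precisely when $q<\lambda_{+}$ (resp.\ $q<\lambda_{-}$). Thus the admissible damping range, and hence the existence of the required $q>0$, is forced by the tempering parameters $\lambda_{\pm}$; checking absolute integrability on this strip, so that Fubini and Fourier inversion apply and the contour stays in the half-plane where the payoff transform converges and the pole of $1/z^{2}$ at $z=0$ is avoided, is the technical heart of the argument.
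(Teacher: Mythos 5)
Your proposal is correct and follows essentially the same route as the paper's proof: compute the complex-shifted Fourier transform of the call/put payoff (obtaining $\mp e^{izk}/z^{2}$ on the appropriate half-plane), recover the expectation by Fubini along a contour $\Im z=\pm q$, and identify the inner integral with the analytically continued characteristic function $e^{\Psi(-z)}$. The differences are cosmetic — you use the opposite transform sign convention, expand the payoff by Fourier inversion rather than the density, and treat the put via $(X-k)^{-}=-(k-X)^{+}$ instead of a direct computation — and you are in fact more explicit than the paper in justifying the existence of $q$ by pinning down the admissible range $0<q<\lambda_{+}$ (resp.\ $0<q<\lambda_{-}$) through the exponential tempering and the finiteness of $E[e^{\pm qX}]$.
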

\begin{proof}[Proof:] \ \\
\noindent
$g(X,k)=(X - k)^{+}$ is the payoff of the call option, and the Fourier transform of $g$ is derived as follows:
 \begin{equation}
 \begin{aligned}
\scrF[g] (y,k)&=\int_{-\infty}^{+\infty}e^{-iyx}g(x,k)dx =\int_{-\infty}^{+\infty}e^{-iyx}{(x - k)^{+}}dx=\int_{k}^{+\infty}e^{-iyx}{(x - k)}dx\\
&=\left[\frac{1 + (x-k)iy}{y^2}e^{-iyx}\right]_{k}^{+\infty}=-\frac{1}{y^2}e^{-iyk} \hspace{5mm}  \hbox{for $\Im(y) < 0$} 
 \label {eq:l93}
\end{aligned}
\end{equation}
Where $\Im(y)$ is the imaginary of $y$.\\
If the  payoff function becomes the payoff of the put option, we have  $g(X,k)=(X - k)^{-}$  and the Fourier transform becomes
 \begin{equation}
  \scrF[g] (y,k)=\left[\frac{1 + (x-k)iy}{y^2}e^{-iyx}\right]_{- \infty}^{k}=\frac{1}{y^2}e^{-iyk}  \hspace{5mm}  \hbox{for $\Im(y) > 0$} 
   \label {eq:l94}
\end{equation}
\noindent
The call payoff (\ref{eq:l92}) can be recovered from the inverse of Fourier if there exists $q>0$ such that 
  \begin{align}
    \check{g}(x,k)=\frac{1}{2\pi}\int_{-\infty + iq}^{+\infty + iq}e^{iyx}\scrF[g] (y,k)dy=-\frac{1}{2\pi}\int_{-\infty + iq}^{+\infty + iq}\frac{1}{y^2}e^{iy(x-k)}dy  
 \label {eq:l95}
\end{align}

  \begin{align*}
E[\left(X - k\right)^{+}]&=\int_{-\infty}^{+\infty} g(y,k)f(y)dy=\frac{1}{2\pi}\int_{-\infty}^{+\infty}\int_{-\infty}^{+\infty}e^{-i y z + \Psi(z)} g(y,k)dydz  \quad   \quad   g(y,k)=(y - k)^{+}\\
&=\frac{1}{2\pi}\int_{-\infty}^{+\infty}e^{\Psi(z)}\int_{-\infty}^{+\infty}e^{-i y z } g(y,k)dydz= \frac{1}{2\pi}\int_{-\infty - iq}^{+\infty - iq}e^{\Psi(z)} \scrF[g] (z,k)dz   \quad  \hbox{recall (\ref{eq:l93})}\\ 
&=-\frac{1}{2\pi}\int_{-\infty - iq}^{+\infty - iq}\frac{1}{z^2}e^{-izk + \Psi(z)} dz \\
&=\frac{1}{2\pi}\int_{-\infty + iq}^{+\infty + iq}\frac{-1}{z^2}e^{izk + \Psi(-z)} dz   
\end{align*}
The same development holds for $E[\left(X - k\right)^{-}]$.
  \end{proof} 
 \noindent
 The error function $ER(k,q)$ between the call payoff and the inverse Fourier (\ref{eq:l95}) is defined as follows. 
 
   \begin{align}
ER(k,q)=\sqrt{ \frac{1}{m}\sum_{j=1}^{m} \left[ ( x_{j} - k)^{+} -  \check{g}(x_{j},k) \right]^2}  \hspace{5mm}  \hbox{with $-M \leq x_{j}\leq M$ \ ,  \ $M>0$} \label{eq:l96}
\end{align}
\noindent
Where $k$ (strike price) and $q$ (parameter) are the inputs of the function $ER(k,q)$. The parameter $q$ is used to optimize the function $ER(k,q)$. Fig \ref{fig15a} shows how the accuracy of the inverse Fourier (\ref{eq:l95}) depends on the parameter $q$.
\begin{figure}[ht]
\vspace{-0.3cm}
    \centering
  \begin{subfigure}[b]{0.42\linewidth}
    \includegraphics[width=\linewidth]{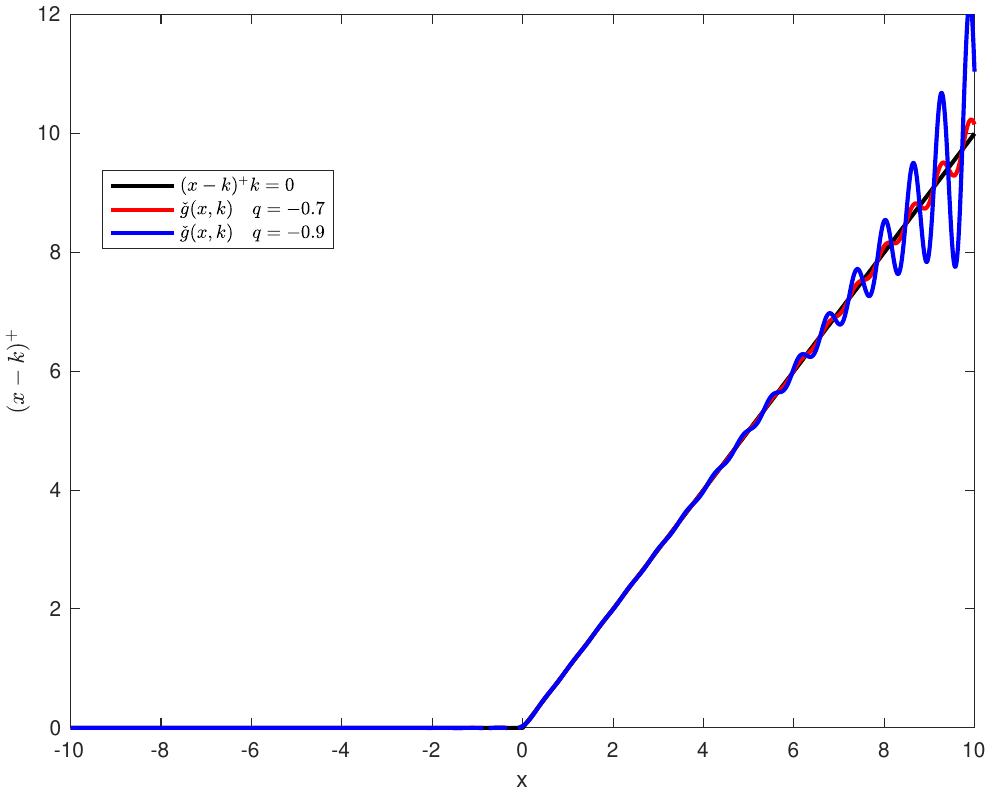}
\vspace{-0.5cm}
     \caption{$(X - k)^{+}$ versus $\check{g}(x,k)$}
         \label{fig15a}
  \end{subfigure}
\hspace{-0.2cm}
  \begin{subfigure}[b]{0.48 \linewidth}
    \includegraphics[width=\linewidth]{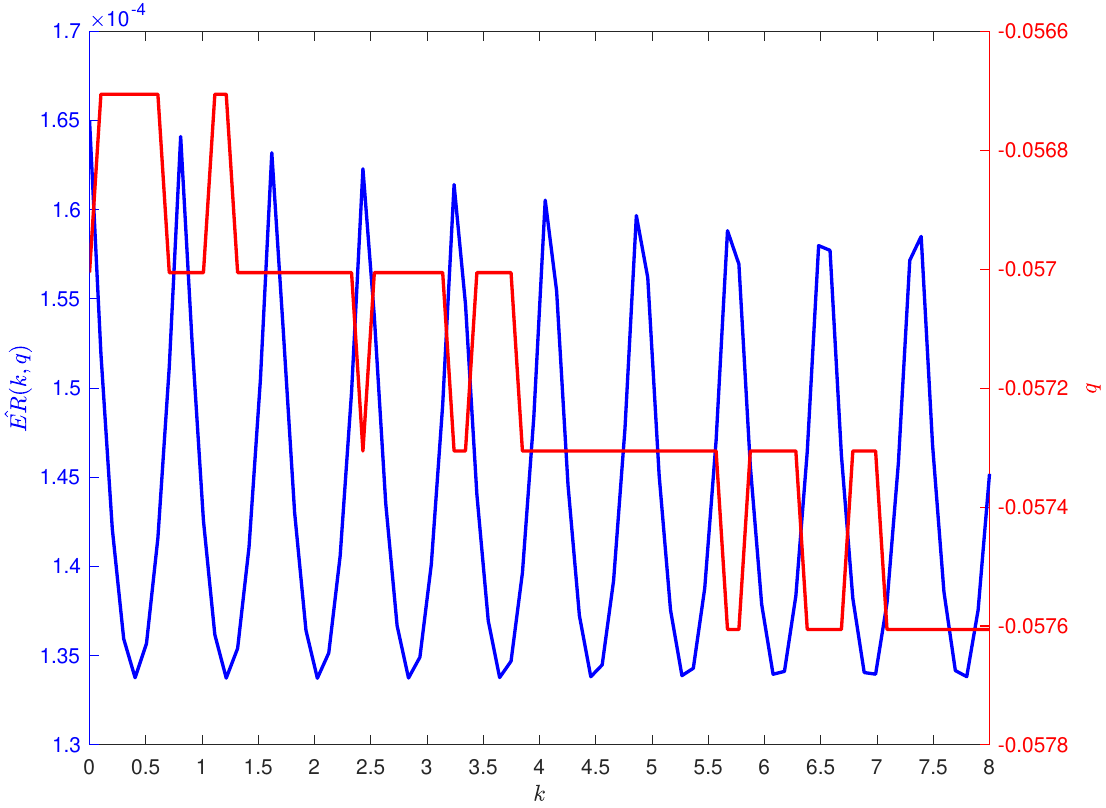}
\vspace{-0.5cm}
     \caption{$ER(k,q)$ and Optimal parameter (q)}
         \label{fig15b}
          \end{subfigure}
\vspace{-0.6cm}
  \caption{Optimal parameter ($q$) and minimum Error value ($ER(k,q)$)}
  \label{fig15ab}
\vspace{-0.3cm}
\end{figure}

\noindent
Fig \ref{fig15b} displays the $ER(k,q)$ minimum value (in blue color) as a function of the strike price ($k$); the correspondent optimal parameter ($q$) is graphed as a function of the strike price ($k$) in Fig \ref{fig15b}. The $ER(k,q)$ minimum value oscillates between $1.337*10^{-4}$ and $1.650*10^{-4}$, which is almost zero. The optimal parameter ($q$) decreases slowly from $-0.0576$ to $-0.0567$ as shown Fig \ref{fig15b}.\\
\newpage
 
  \begin{corollary}\label{lem2}\ \\
  \noindent
$X$ is a GTS(\textbf{$\mu$}, \textbf{$\beta_{+}$}, \textbf{$\beta_{-}$}, \textbf{$\alpha_{+}$},\textbf{$\alpha_{-}$}, \textbf{$\lambda_{+}$}, \textbf{$\lambda_{-}$}) random variable with characteristic exponent function $\Psi(\xi)$.\\
\noindent
There exists $q>0$ such that 
  \begin{equation}
 \begin{aligned}
AVaR_{1-\alpha}(X) =VaR_{1-\alpha}(X) + \frac{1}{1-\alpha}\frac{1}{2\pi}\int_{-\infty + iq}^{+\infty + iq}\frac{-1}{z^2}e^{izVaR_{1-\alpha}(X) + \Psi(-z)} dz  \hspace{5mm}  &\hbox{$\alpha > \frac{1}{2}$} \\
AVaR_{\alpha}(X) =VaR_{\alpha}(X) + \frac{1}{\alpha}\frac{1}{2\pi}\int_{-\infty - iq}^{+\infty - iq}\frac{1}{z^2}e^{izVaR_{\alpha}(X) + \Psi(-z)} dz  \hspace{20mm}  &\hbox{$\alpha < \frac{1}{2}$} \label{eq:l97}
\end{aligned}
\end{equation}
\end{corollary}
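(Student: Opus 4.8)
The plan is to treat Corollary~\ref{lem2} as a direct specialization of Theorem~\ref{lem1}, combined with the tail-averaging decompositions already derived in~(\ref{eq:l91}). No new analytic machinery is needed beyond what is established above; the corollary is essentially a substitution, so I would keep the argument short and purely computational.

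First I would recall the two identities in~(\ref{eq:l91}). For $\alpha > \tfrac{1}{2}$ the upper average value-at-risk decomposes as $AVaR_{1-\alpha}(X) = VaR_{1-\alpha}(X) + \frac{1}{1-\alpha}E[(X - VaR_{1-\alpha}(X))^{+}]$, while for $\alpha < \tfrac{1}{2}$ the lower one reads $AVaR_{\alpha}(X) = VaR_{\alpha}(X) + \frac{1}{\alpha}E[(X - VaR_{\alpha}(X))^{-}]$. These already express both quantities in terms of an expected call payoff and an expected put payoff, each evaluated at a strike equal to the relevant quantile of the distribution.

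Next I would invoke Theorem~\ref{lem1}, which in~(\ref{eq:l92}) supplies the Fourier-type contour representations of $E[(X-k)^{+}]$ and $E[(X-k)^{-}]$, valid for some $q>0$ and for an arbitrary real strike $k$. Setting $k = VaR_{1-\alpha}(X)$ in the call representation and $k = VaR_{\alpha}(X)$ in the put representation, and then inserting these into the two decompositions from~(\ref{eq:l91}), produces precisely the two displayed formulas of~(\ref{eq:l97}). This is the entire content of the proof.

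The only point requiring care is the choice of the contour shift $q$. Theorem~\ref{lem1} guarantees a strictly positive $q$ for which the relevant integral converges, and because that statement is uniform in the real strike $k$, it applies in particular to the fixed finite numbers $VaR_{1-\alpha}(X)$ and $VaR_{\alpha}(X)$ determined by the quantile condition $F(x_{\alpha})=\alpha$. Hence the same $q$ transfers directly into each case, and I do not expect any genuine obstacle: the corollary follows by substitution, with convergence inherited verbatim from the theorem.
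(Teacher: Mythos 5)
Your proposal is correct and takes essentially the same route as the paper: the paper's own proof consists precisely of substituting $k = VaR_{\alpha}(X)$ (resp.\ $k = VaR_{1-\alpha}(X)$) into the contour representations of Theorem~\ref{lem1} and inserting the result into the decompositions~(\ref{eq:l91}). Your closing remark on the uniformity of the contour shift $q$ in the strike $k$ is a small elaboration that the paper leaves implicit.
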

\begin{proof}[Proof:] \ \\
Equations ($\ref{eq:l91}$) lead to equation ($\ref{eq:l97}$) by substituting $k=VaR_{\alpha}(X)$ and applying theorem ($\ref{lem1}$).
\end{proof} 

\noindent 
 For tail probability ($\alpha$) from $0.5\%$, $1\%$,\dots, $10\%$, the theoretical and empirical Value-at-Risk ($VaR_{\alpha}(X)$) was computed and summarized in table \ref{tab9} (appendix \ref{eq:an2}). For the correspondent confidence Level ($1-\alpha$) from $90\%$,\dots, $99\%$, $99.5\%$ the estimations of the the theoretical and empirical Value-at-Risk ($VaR_{1-\alpha}(X)$) are summarised in table \ref{tab7}. Both tables show that the empirical and theoretical Value-at-Risk are consistent. As expected, the bitcoin theoretical and empirical Value-at-Risk ($VaR_{1-\alpha}(X)$) are higher than that of the S$\&$P 500 index, which is consistent with the heavy-tailedness of the bitcoin.

\begin{table}[ht]
\vspace{-0.5cm}
 \caption{ Average Value-at-Risk Statistics }
\label{tab7}
\centering
\begin{tabular}{@{}c | c c | c c@{}}
\toprule
  \multicolumn{1}{c|}{\textbf{$AVaR_{1-\alpha}(X)$}} & \multicolumn{2}{c|}{\textbf{S\&P 500 (\%)}} & \multicolumn{2}{c}{\textbf{Bitcoin (\%)}} \\ \toprule
\multirow{1}{*}{\textbf{Confidence Level ($\alpha$)}} & \multirow{1}{*}{\textbf{Empirical}} & \multirow{1}{*}{\textbf{Theoretical}}  & \multirow{1}{*}{\textbf{Empirical}} & \multirow{1}{*}{\textbf{Theoretical}} \\ \toprule
\multirow{1}{*}{$90\%$} & \multirow{1}{*}{1.8976} & \multirow{1}{*}{1.9278} &\multirow{1}{*}{7.3077} & \multirow{1}{*}{7.3190} \\ 
\multirow{1}{*}{$91\%$} & \multirow{1}{*}{1.9730} & \multirow{1}{*}{2.0074} & \multirow{1}{*}{7.6273} & \multirow{1}{*}{7.6451} \\
\multirow{1}{*}{$92\%$} & \multirow{1}{*}{2.0584} & \multirow{1}{*}{2.0971} & \multirow{1}{*}{7.9916} & \multirow{1}{*}{8.0130} \\
\multirow{1}{*}{$93\%$} & \multirow{1}{*}{2.1585} & \multirow{1}{*}{2.1997} & \multirow{1}{*}{8.4150} & \multirow{1}{*}{8.4343} \\ 
\multirow{1}{*}{$94\%$} & \multirow{1}{*}{2.2790} & \multirow{1}{*}{2.3193} & \multirow{1}{*}{8.9274} & \multirow{1}{*}{8.9259} \\ 
\multirow{1}{*}{$95\%$} & \multirow{1}{*}{2.4304} & \multirow{1}{*}{2.4624} &\multirow{1}{*}{9.5154} & \multirow{1}{*}{9.5145} \\ 
\multirow{1}{*}{$96\%$} & \multirow{1}{*}{2.6269} & \multirow{1}{*}{2.6399} & \multirow{1}{*}{10.2239} & \multirow{1}{*}{10.2448} \\ 
\multirow{1}{*}{$97\%$} & \multirow{1}{*}{2.8887} & \multirow{1}{*}{2.8724} &\multirow{1}{*}{11.1451} & \multirow{1}{*}{11.2015} \\ 
\multirow{1}{*}{$98\%$} & \multirow{1}{*}{3.3001} & \multirow{1}{*}{3.2070} & \multirow{1}{*}{12.3666} & \multirow{1}{*}{12.5767} \\
\multirow{1}{*}{$99\%$} & \multirow{1}{*}{4.1403} & \multirow{1}{*}{3.7960} & \multirow{1}{*}{14.5488} & \multirow{1}{*}{14.9924} \\
\multirow{1}{*}{$99.5\%$} & \multirow{1}{*}{5.2054} & \multirow{1}{*}{4.4047} & \multirow{1}{*}{16.9509} & \multirow{1}{*}{17.4790} \\ \bottomrule
\end{tabular}%
\end{table}
\newpage
\noindent
To generalize the computation performed in table \ref{tab7} \& \ref{tab9} and accounted for a large range of values, we consider the interval $[0,10]$ for tail probability ($\alpha$) in Fig \ref{fig:16}; and the interval $[90,100]$ for confidence level ($1-\alpha$) in Fig \ref{fig:15}. As illustrated in Fig \ref{fig:15}, the average value-at-risk (\textbf{$AVaR_{1-\alpha}(X)$}) of bitcoin and S$\&$P 500 index increase at an increasing rate on both intervals, which justified concave nature of each curve. 
  \begin{figure}[ht]
     \centering
     \begin{subfigure}[b]{0.45\textwidth}
         \centering
         \includegraphics[width=\textwidth]{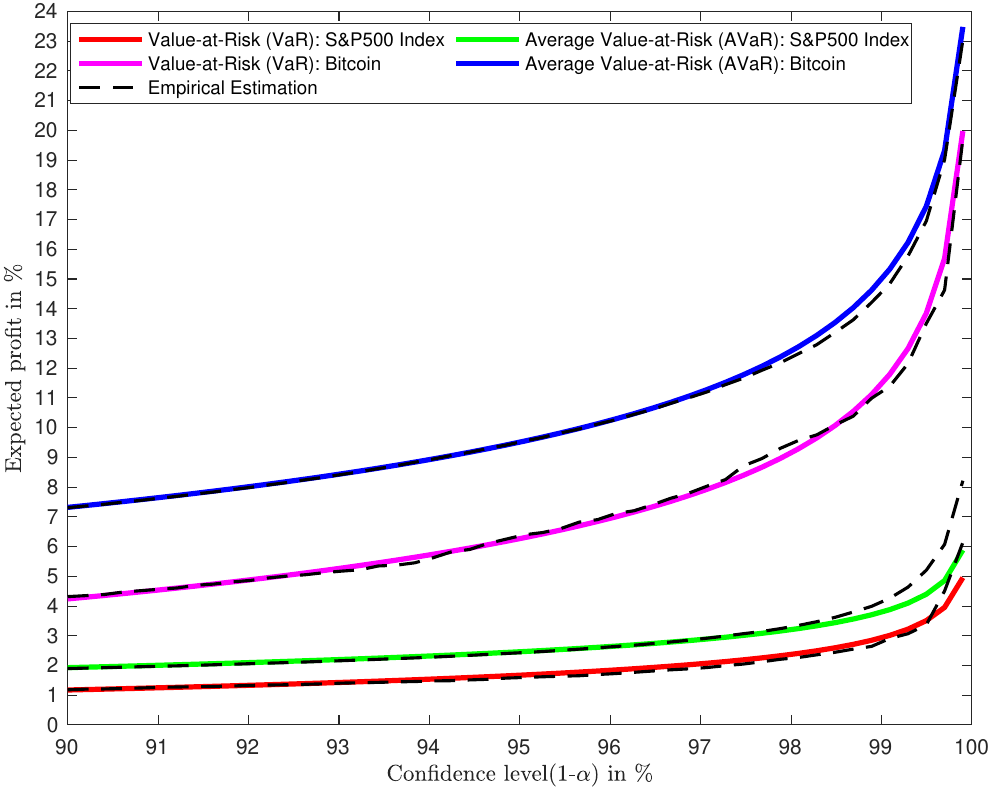}
         \vspace{-0.6cm}
         \caption{\textbf{$VaR_{1-\alpha}(X)$} versus  \textbf{$AVaR_{1-\alpha}(X)$}}
         \label{fig:15}
     \end{subfigure}
     \begin{subfigure}[b]{0.45\textwidth}
         \centering
         \includegraphics[width=\textwidth]{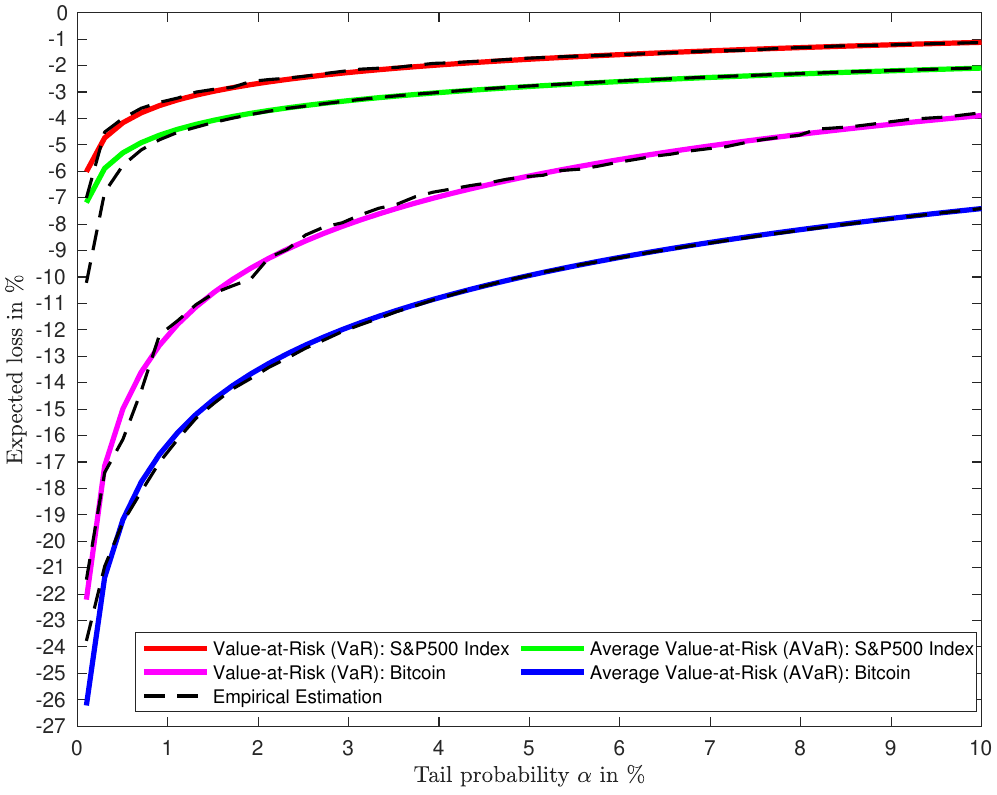}
         \vspace{-0.6cm}
         \caption{\textbf{$VaR_{\alpha}(X)$} versus  \textbf{$AVaR_{\alpha}(X)$}}
         \label{fig:16}
     \end{subfigure}
        \vspace{-0.3 cm}
        \caption{Average Value-at-Risk Statistics}
        \label{fig:17}
       \vspace{-0.3cm}
\end{figure}

\noindent
The \textbf{$AVaR_{1-0.999}(X)=23.48\%$} for bitcoin and \textbf{$AVaR_{1-0.999}(X)=5.88\%$} for the S$\&$P 500 index; and the tail probability($\alpha$), the \textbf{$AVaR_{0.001}(X)=-26.21\%$} for bitcoin and \textbf{$AVaR_{0.001}(X)=-7.18\%$} for the S$\&$P 500 index. As shown in Fig \ref{fig:17}, the left side of the tail probability generates higher value-at-risk (\textbf{$VaR_{\alpha}(X)$}) and average value-at-risk (\textbf{$AVaR_{\alpha}(X)$}) than the right side of tail probability for both bitcoin and S$\&$P 500 index. At a risk level ($\alpha$), the severity of the loss (\textbf{$AVaR_{\alpha}(X)$}) on the left side of the distribution is larger than the severity of the profit (\textbf{$AVaR_{1-\alpha}(X)$}) on the right side of the distribution. The result comes from the left-skewness nature of the distribution for both bitcoin and S$\&$P 500 index.\\

\noindent
The magnitude of the discrepancy between the average value-at-risk (\textbf{$AVaR_{\alpha}(X)$}) of bitcoin and S$\&$P 500 index can be evaluated by dividing the \textbf{$AVaR_{\alpha}(X)$} of the Bitcoin by the \textbf{$AVaR_{\alpha}(X)$} of the S$\&$P 500 index and analyzed the relationship as shown in Fig \ref{fig:18}.

 \begin{figure}[ht]
    \centering
         \includegraphics[width=0.45\textwidth]{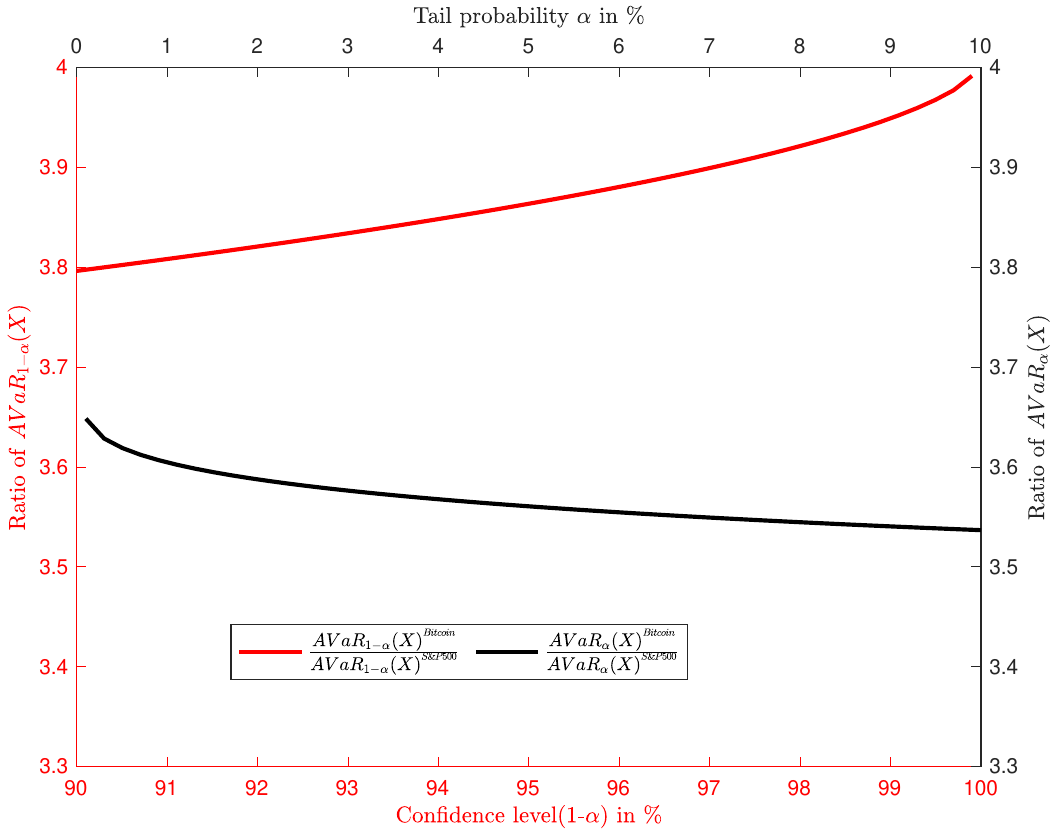}
         \vspace{-0.2cm}
         \caption{$\frac{\textbf{$AVaR_{\alpha}(X)$}^{Bitcoin}}{\textbf{$AVaR_{\alpha}(X)$}^{S\&P 500}}$}
         \label{fig:18}
         \vspace{-0.3cm}
\end{figure}

\noindent
The Profits generated by the average value-at-risk (\textbf{$AVaR_{1-\alpha}(X)$}) of bitcoin at one significant figure is four times larger than that of the S$\&$P 500 index. Similarly, the losses generated by the average value-at-risk (\textbf{$AVaR_{\alpha}(X)$}) of bitcoin at one significant figure is four times larger than that of the S$\&$P 500 index as shown in Fig \ref{fig:18}. 


\section {Conclusion} 
\noindent
The paper analyzes the daily return distribution of the bitcoin and S$\&$P 500 index. It assesses their tail probabilities through the value-at-risk (VaR) and the average value-at-risk (AVaR). As a methodology, We use the historical prices for Bitcoin and S$\&$P 500 index. The Bitcoin price spans from January 04, 2010, to June 16, 2023, and the S$\&$P 500 index price period is from April 28, 2013, to June 22, 2023. Each historical data is used to fit the seven-parameter General Tempered Stable (GTS) distribution to the underlying data return distribution. The advanced Fast Fractional Fourier transform (FRFT) scheme is developed by combining the classic Fast Fractional Fourier (FRFT) algorithm with the 12-point rule Composite Newton-Cotes Quadrature. The advanced FRFT scheme is used to perform the Maximum likelihood estimation of the seven parameters of the GTS distribution and the computations derived from the GTS density and cumulation distribution functions. It results from the GTS distribution fitting that the stability indexes, the process intensities, and the decay rate are all positive, and the bitcoin return and S$\&$P 500 index return are infinite activity processes with infinite jumps in any given time interval. The parameter analysis shows that the Bitcoin and S$\&$P 500 index returns are left-skewed distributions. The study of the probability density function (pdf) and some Key Statistics show that the tail events of both Bitcoin and S$\&$P 500 index are much more prevalent than we would expect with a Normal distribution. Both probability density functions are leptokurtic distributions. However, The heavy-tailedness is the main characteristic of the Bitcoin return, whereas the peakedness is the main characteristic of the S$\&$P 500 index return. The GTS distribution shows that $80.05\%$ of S$\&$P 500 returns are within $-1.06\%$ and $1.23\%$ against only $40.32\%$ of Bitcoin returns. The value-at-risk and the average value-at-risk reveal significant differences in tail probability between the Bitcoin returns and S$\&$P 500 index returns. At a risk level ($\alpha$), the severity of the loss (\textbf{$AVaR_{\alpha}(X)$}) on the left side of the distribution is larger than the severity of the profit (\textbf{$AVaR_{1-\alpha}(X)$}) on the right side of the distribution.  Compared to the S$\&$P 500 index, Bitcoin has $39.73\%$ more prevalence to produce high daily returns (more than $1.23\%$ or less than $-1.06\%$). The severity analysis shows that at a risk level ($\alpha$) the average value-at-risk (\textbf{$AVaR(X)$}) of the bitcoin returns at one significant figure is four times larger than that of the S$\&$P 500 index returns at the same risk.

\bibliographystyle{unsrt}
\nocite{*}
\bibliography{nzokem_pricing}

\clearpage
\newpage\appendix
\renewcommand{\thesection}{\Alph{section}.\arabic{section}}
\setcounter{section}{0}
\section{GTS(\textbf{$\beta_{+}$},\textbf{$\beta_{-}$},\textbf{$\alpha_{+}$},\textbf{$\alpha_{-}$},\textbf{$\lambda_{+}$},\textbf{$\lambda_{-}$}) Parameter Estimation \\
by Newton – Raphson Iteration Algorithm (\ref{eq:l38})}\label{eq:an1}
\begin{figure}[ht]
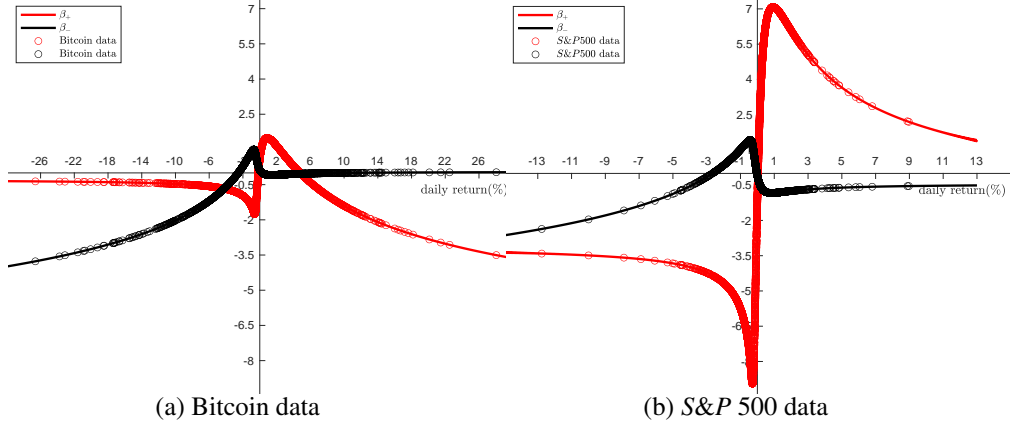

\vspace{-0.6cm}
    \centering
\hspace{-0.3cm}
  \begin{subfigure}[b]{0.45\linewidth}
    \includegraphics[width=\linewidth]{betabit}
\vspace{-0.7cm}
     \caption{Bitcoin data}
         \label{fig66}
  \end{subfigure}
\hspace{-0.45cm}
  \begin{subfigure}[b]{0.45\linewidth}
    \includegraphics[width=\linewidth]{betasp}
\vspace{-0.7cm}
     \caption{$S\&P$ 500 data}
         \label{fig70}
          \end{subfigure}
\vspace{-0.7cm}
  \caption{Stability indexes ($\beta^{+}$, $\beta^{-}$): $\frac{\frac{df(x,V)}{d{\beta}}}{f(x,V)}$}
  \label{fig66a}
\vspace{-0.4cm}
\end{figure}
\begin{figure}[ht]
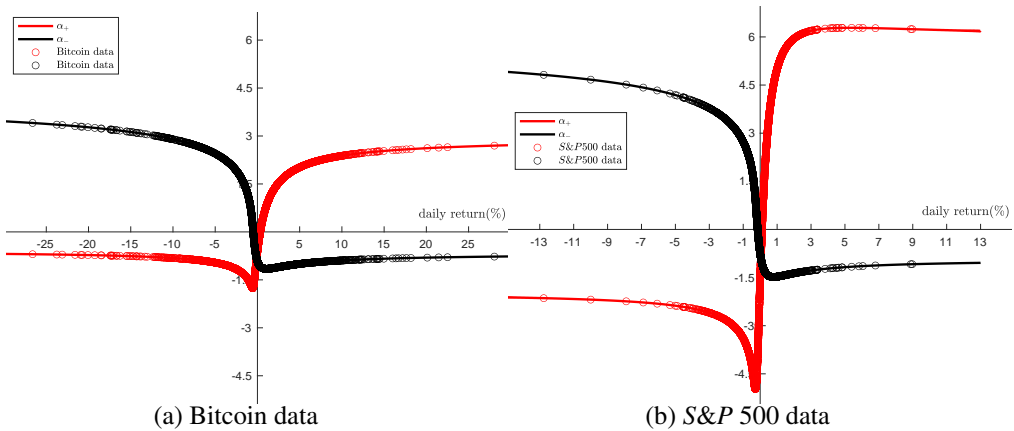

\vspace{-0.5cm}
    \centering
\hspace{-0.3cm}
  \begin{subfigure}[b]{0.45\linewidth}
    \includegraphics[width=\linewidth]{alphabit}
\vspace{-0.7cm}
     \caption{Bitcoin data}
         \label{fig66}
  \end{subfigure}
\hspace{-0.4cm}
  \begin{subfigure}[b]{0.45\linewidth}
    \includegraphics[width=\linewidth]{alphasp}
\vspace{-0.7cm}
     \caption{$S\&P$ 500 data}
         \label{fig70}
          \end{subfigure}
\vspace{-0.7cm}
  \caption{Process intensity ($\alpha^{+}$, $\alpha^{-}$): $\frac{\frac{df(x,V)}{d{\alpha}}}{f(x,V)}$}
  \label{fig66a}
\vspace{-0.7cm}
\end{figure}
\begin{figure}[ht]
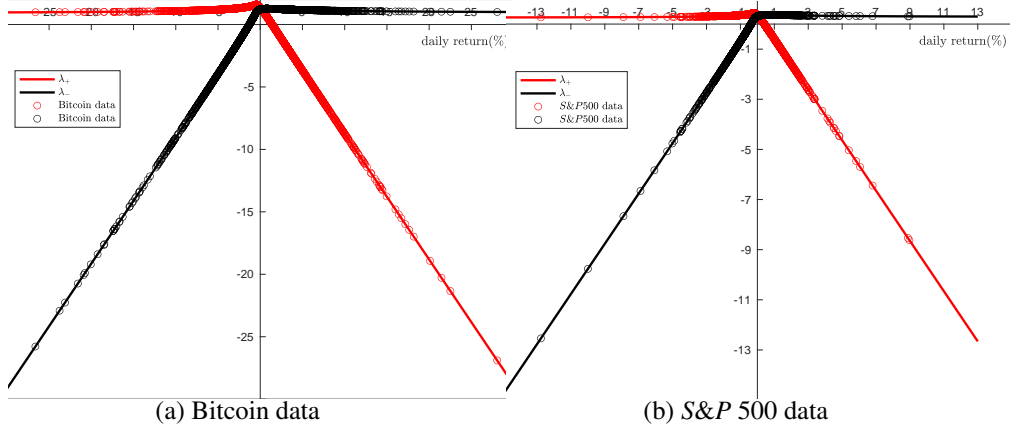

\vspace{-0.3cm}
    \centering
\hspace{-0.3cm}
  \begin{subfigure}[b]{0.45\linewidth}
    \includegraphics[width=\linewidth]{lambbit}
\vspace{-0.7cm}
     \caption{Bitcoin data}
         \label{fig66}
  \end{subfigure}
\hspace{-0.45cm}
  \begin{subfigure}[b]{0.45\linewidth}
    \includegraphics[width=\linewidth]{lambsp}
\vspace{-0.7cm}
     \caption{$S\&P$ 500 data}
         \label{fig70}
          \end{subfigure}
\vspace{-0.7cm}
  \caption{Tail decay rate ($\lambda^{+}$, $\lambda^{-}$): $\frac{\frac{df(x,V)}{d{\lambda}}}{f(x,V)}$}
  \label{fig66a}
\vspace{-0.7cm}
\end{figure}

\section{Value-at-Risk and Average Value-at-Risk}\label{eq:an2}

\begin{table}[ht]
\vspace{-0.5cm}
 \caption{ Value-at-Risk Statistics }
\label{tab8}
\centering
\begin{tabular}{@{}c | c c | c c@{}}
\toprule
\multicolumn{1}{c|}{\textbf{$VaR_{\alpha}(X)$}} & \multicolumn{2}{c|}{\textbf{S\&P 500 index (\%)}} & \multicolumn{2}{c}{\textbf{Bitcoin (\%)}} \\ \toprule
\multirow{1}{*}{\textbf{Confidence Level ($\alpha$)}} & \multirow{1}{*}{\textbf{Empirical}} & \multirow{1}{*}{\textbf{Theoretical}} & \multirow{1}{*}{\textbf{Empirical}} & \multirow{1}{*}{\textbf{Theoretical}} \\ \toprule
\multirow{1}{*}{$0.5\%$} & \multirow{1}{*}{-4.0015} & \multirow{1}{*}{-4.6199} &\multirow{1}{*}{-16.2620} & \multirow{1}{*}{-15.0205} \\ 
\multirow{1}{*}{$1\%$} & \multirow{1}{*}{-3.2895} & \multirow{1}{*}{-3.4102} & \multirow{1}{*}{-11.9257} & \multirow{1}{*}{-12.2018} \\
\multirow{1}{*}{$2\%$} & \multirow{1}{*}{-2.5580} & \multirow{1}{*}{-2.6769} & \multirow{1}{*}{-9.7612} & \multirow{1}{*}{-9.5086} \\
\multirow{1}{*}{$3\%$} & \multirow{1}{*}{-2.1678} & \multirow{1}{*}{-2.2626} & \multirow{1}{*}{-7.7768} & \multirow{1}{*}{-7.9985} \\ 
\multirow{1}{*}{$4\%$} & \multirow{1}{*}{-1.9143} & \multirow{1}{*}{-1.9766} & \multirow{1}{*}{-6.7164} & \multirow{1}{*}{-6.9609} \\ 
\multirow{1}{*}{$5\%$} & \multirow{1}{*}{-1.7206} & \multirow{1}{*}{-1.7598} &\multirow{1}{*}{-6.1974} & \multirow{1}{*}{-6.1779} \\ 
\multirow{1}{*}{$6\%$} & \multirow{1}{*}{-1.5858} & \multirow{1}{*}{-1.5863} & \multirow{1}{*}{-5.6524} & \multirow{1}{*}{-5.5536} \\ 
\multirow{1}{*}{$7\%$} & \multirow{1}{*}{-1.4500} & \multirow{1}{*}{-1.4424} &\multirow{1}{*}{-5.1455} & \multirow{1}{*}{-5.0377} \\ 
\multirow{1}{*}{$8\%$} & \multirow{1}{*}{-1.3183} & \multirow{1}{*}{-1.3201} & \multirow{1}{*}{-4.6358} & \multirow{1}{*}{-4.6002} \\
\multirow{1}{*}{$9\%$} & \multirow{1}{*}{-1.2197} & \multirow{1}{*}{-1.2140} & \multirow{1}{*}{-4.1235} & \multirow{1}{*}{-4.2220} \\
\multirow{1}{*}{$10\%$} & \multirow{1}{*}{-1.1301} & \multirow{1}{*}{-1.1207} & \multirow{1}{*}{-3.7848} & \multirow{1}{*}{-3.8903} \\ 
\bottomrule
\end{tabular}%
\vspace{-0.6cm}
\end{table}

\bigskip

\begin{table}[ht]
\vspace{-0.5cm}
 \caption{ Average Value-at-Risk Statistics }
\label{tab9}
\centering
\begin{tabular}{@{}c | c c | c c@{}}
\toprule
  \multicolumn{1}{c|}{\textbf{$AVaR_{1-\alpha}(X)$}}& \multicolumn{2}{c|}{\textbf{S\&P 500 index (\%)}} & \multicolumn{2}{c}{\textbf{Bitcoin (\%)}} \\ \toprule
\multirow{1}{*}{\textbf{Confidence Level ($\alpha$)}} & \multirow{1}{*}{\textbf{Empirical}} & \multirow{1}{*}{\textbf{Theoretical}}  & \multirow{1}{*}{\textbf{Empirical}} & \multirow{1}{*}{\textbf{Theoretical}} \\ \toprule
\multirow{1}{*}{$0.5\%$} & \multirow{1}{*}{-5.7738} & \multirow{1}{*}{-5.3096} &\multirow{1}{*}{-19.2754} & \multirow{1}{*}{-19.2164} \\ 
\multirow{1}{*}{$1\%$} & \multirow{1}{*}{-4.6751} & \multirow{1}{*}{-4.5264} & \multirow{1}{*}{-16.6120} & \multirow{1}{*}{-16.3162} \\
\multirow{1}{*}{$2\%$} & \multirow{1}{*}{-3.7955} & \multirow{1}{*}{-3.7627} & \multirow{1}{*}{-13.6827} & \multirow{1}{*}{-13.4993} \\
\multirow{1}{*}{$3\%$} & \multirow{1}{*}{-3.3262} & \multirow{1}{*}{-3.3268} & \multirow{1}{*}{-11.9924} & \multirow{1}{*}{-11.8980} \\ 
\multirow{1}{*}{$4\%$} & \multirow{1}{*}{-3.0149} & \multirow{1}{*}{-3.0233} & \multirow{1}{*}{-10.8089} & \multirow{1}{*}{-10.7862} \\ 
\multirow{1}{*}{$5\%$} & \multirow{1}{*}{-2.7751} & \multirow{1}{*}{-2.7915} &\multirow{1}{*}{-9.9376} & \multirow{1}{*}{-9.9395} \\ 
\multirow{1}{*}{$6\%$} & \multirow{1}{*}{-2.5867} & \multirow{1}{*}{-2.6047} & \multirow{1}{*}{-9.2719} & \multirow{1}{*}{-9.2587} \\ 
\multirow{1}{*}{$7\%$} & \multirow{1}{*}{-2.4335} & \multirow{1}{*}{-2.4487} &\multirow{1}{*}{-8.7170} & \multirow{1}{*}{-8.6914} \\ 
\multirow{1}{*}{$8\%$} & \multirow{1}{*}{-2.3029} & \multirow{1}{*}{-2.3152} & \multirow{1}{*}{-8.2367} & \multirow{1}{*}{-8.2067} \\
\multirow{1}{*}{$9\%$} & \multirow{1}{*}{-2.1865} & \multirow{1}{*}{-2.1987} & \multirow{1}{*}{-7.8029} & \multirow{1}{*}{-7.7845} \\
\multirow{1}{*}{$10\%$} & \multirow{1}{*}{-2.0846} & \multirow{1}{*}{-2.0955} & \multirow{1}{*}{-7.4088} & \multirow{1}{*}{-7.4114} \\ 
\bottomrule
\end{tabular}%
\vspace{-0.6cm}
\end{table}

\end{document}
%